\newtheorem{theorem}{Theorem}[section]
\newtheorem{lemma}[theorem]{Lemma}
\newtheorem{claim}[theorem]{Claim}
\theoremstyle{definition}
\newtheorem*{problem}{Problem}
\theoremstyle{remark}
\newtheorem{remark}[theorem]{Remark}
\newcommand{\Ain}{A_\mathrm{in}}
\newcommand{\Aout}{A_\mathrm{out}}
\newcommand{\Yin}{\tilde{Y}_\mathrm{in}}
\newcommand{\Yout}{\tilde{Y}_\mathrm{out}}
\newcommand{\Zin}{\tilde{Z}_\mathrm{in}}
\newcommand{\Zout}{\tilde{Z}_\mathrm{out}}
\newcommand{\Hin}{\tilde{H}_\mathrm{in}}
\newcommand{\Sopt}{S_\mathrm{opt}}
\newcommand{\Mopt}{M_\mathrm{opt}}
\title{Approximation and FPT Algorithms\\for Finding DM-Irreducible Spanning Subgraphs}
\author{
  Ryoma Norose\thanks{Independent Researcher (Graduated from Osaka University on March 2024), Japan.} \and
  Yutaro Yamaguchi\thanks{Osaka University, Japan. Email: \texttt{yutaro.yamaguchi@ist.osaka-u.ac.jp}}}
\date{\empty}
\begin{document}
\maketitle
\thispagestyle{empty}

\begin{abstract}
Finding a minimum-weight strongly connected spanning subgraph of an edge-weighted directed graph is equivalent to the weighted version of the well-known strong connectivity augmentation problem.
This problem is NP-hard, and a simple $2$-approximation algorithm was proposed by Frederickson and J{\'a}j{\'a} (1981); surprisingly, it still achieves the best known approximation ratio in general.
Also, Bang-Jensen and Yeo (2008) showed that the unweighted problem is FPT (fixed-parameter tractable) parameterized by the difference from a trivial upper bound of the optimal value.
In this paper, we consider a generalization related to the Dulmage--Mendelsohn decompositions of bipartite graphs instead of the strong connectivity of directed graphs, and extend these approximation and FPT results to the generalized setting.
\end{abstract}

\paragraph{Keywords}
Connectivity augmentation, Bipartite matching, Matroid intersection, Approximation algorithm, Fixed-parameter tractability.

\clearpage
\setcounter{page}{1}

\section{Introduction}\label{section:1}
A directed graph is called \emph{strongly connected} if there exists a path between any pair of vertices.
The strong connectivity augmentation problem (\textsc{SCA}) is, given a directed graph and the weights of additional edges, to find a minimum-weight subset of additional edges making the graph strongly connected.
While \textsc{SCA} is NP-hard in general, it can be solved in linear time in the unweighted case (when an edge between any pair of veritces can be added with the same weight)~\cite{eswaran1976augmentation}.

The following problem, called the strongly connected spanning subgraph problem (\textsc{SCSS}), is polynomial-time equivalent to \textsc{SCA} in general: given a strongly connected directed graph with edge weight, find a minimum-weight spanning subgraph that is strongly connected.
This problem is one of the most fundamental special cases of the survivable network design problem~\cite{steiglitz1969design, dahl1993design}.
Frederickson and J{\'a}j{\'a}~\cite{frederickson1981approximation} gave a $2$-approximation algorithm for \textsc{SCSS}; this is based on a very simple observation (see Section~\ref{subsection:3.1}) and, surprisingly, it still achieves the best known approximation ratio in general.
For the unweighted case (when all the edges have the same weight, which is already NP-hard in contrast to \textsc{SCA}), several algorithms with better approximation ratios \cite{khuller1996strongly,vetta2001approximating,zhao2003linear,khuller1995approximating} have been proposed.
Also, Bang-Jensen and Yeo~\cite{bang2008minimum} showed that the unweighted case is FPT parameterized by the difference from a trivial upper bound of the optimal value (see Sections~\ref{section:2.2} and \ref{section:4.1});
here, the problem is called \emph{FPT (fixed-parameter tractable)} if it can be solved in $f(k) \cdot n^{O(1)}$ time, where $k$ and $f$ are the prescribed parameter and computable function, respectively, and $n$ is the input size.

\textsc{SCSS} (or \textsc{SCA}) is generalized to several directions.
One natural generalization is the $k$(-edge)-connected spanning subgraph (or $k$(-edge)-connectivity augmentation) problem, where a directed graph is called \emph{$k$-connected} (or \emph{$k$-edge-connected}) if there exists $k$ openly vertex-disjoint (or edge-disjoint) paths between any pair of vertices.
The case when $k = 1$ corresponds to \textsc{SCSS} (or \textsc{SCA}), and the general case is also well-studied~\cite{frank1992augmenting,frank1995minimal,khuller1994biconnectivity,nutov2014approximating,bang2018parameterized}.

B\'{e}rczi, Iwata, Kato, and Yamaguchi~\cite{berczi2018DMImaking} recently introduced another generalization of \textsc{SCA}, which is related to the \emph{Dulmage--Mendelsohn decomposition} (the \emph{DM-decomposition}) \cite{dulmage1958coverings,dulmage1959structure} of a bipartite graph.
The DM-decomposition gives a partition of the vertex set that reflects the structure of all maximum matchings therein, and a bipartite graph is \emph{DM-irreducible} if its DM-decomposition consists of a single component.
This property is closed under adding edges, and we analogously formulate the DM-irreducible spanning subgraph problem (\textsc{DMISS}): given a DM-irreducible bipartite graph with edge weight, find a minimum-weight spanning subgraph that is DM-irreducible.
In terms of \textsc{DMISS}, B\'{e}rczi et al.~\cite{berczi2018DMImaking} proposed polynomial-time algorithms for the case when the input bipartite graph is unbalanced (which reduces to the weighted matroid intersection problem), and for the case when the input graph is a balanced complete bipartite graph and the weight of each edge is $0$ or $1$; here a bipartite graph is \emph{balanced} if the two color classes have the same number of vertices.
\textsc{DMISS} for balanced bipartite graphs generalizes \textsc{SCSS} (cf.~\cite[Section 2.3]{berczi2018DMImaking}), and the latter result extends the polynomial-time solvability of the unweighted case of \textsc{SCA}~\cite{eswaran1976augmentation}.

In this paper, we investigate \textsc{DMISS} for balanced bipartite graphs, whose algorithmic difficulty has remained widely open (other than the fact that it is NP-hard even in the unweighted case as so is SCSS).
As natural but nontrivial results, by extending the approaches to SCSS by Frederickson and J{\'a}j{\'a}~\cite{frederickson1981approximation} and by Bang-Jensen and Yeo~\cite{bang2008minimum}, we propose a $2$-approximation algorithm in general and an FPT algorithm for the unweighted case.
On the approximation result, we first show that a naive extension of \cite{frederickson1981approximation} gives a $3$-approximation algorithm, and then improve it to a $2$-approximation algorithm with the aid of the matroid intersection framework.
On the FPT result, we demonstrate that the idea of \cite{bang2008minimum} can be extended by carefully adjusting several key concepts and discussions.

The rest of this paper is organized as follows.
In Section \ref{section:2}, we describe necessary definitions, and formally state the problems and our results.
In Section~\ref{section:3}, we give a $2$-approximation algorithm for \textsc{DMISS}.
In Section~\ref{section:4}, we prove the unweighted case of \textsc{DMISS} is FPT parameterized by the difference from a trivial upper bound of the optimal value.

\section{Preliminaries}\label{section:2}
\subsection{Definitions}
We refer the readers to \cite{schrijver2003combinatorial} for basic concepts and notation on graphs.
In this paper, a \emph{graph} means a directed graph or an undirected graph.
We call a directed graph simply a \emph{digraph}.

For a graph $G$, we denote by $V(G)$ and $E(G)$ the vertex set and the edge set of $G$, respectively.
A \emph{subgraph} of $G$ is a graph $G'$ such that $V(G') \subseteq V(G)$, $E(G') \subseteq E(G)$, and each edge $e \in E(G')$ is between two vertices in $V(G')$.
A subgraph $G'$ of $G$ is \emph{induced by} a vertex set $X \subseteq V(G)$, denoted by $G[X]$, if $V(G') = X$ and $E(G')$ is the set of edges in $E(G)$ whose end vertices are both in $X$.
We also define $G - X \coloneqq G[V(G) \setminus X]$ for $X \subseteq V(G)$.
A subgraph $G'$ of $G$ is \emph{spanning} if $V(G') = V(G)$.
We often do not distinguish between a spanning subgraph $G'$ and its edge set $E(G')$.
Also, we often specify a graph $G$ as the pair $(V, E)$ of its vertex set $V = V(G)$ and edge set $E = E(G)$.

Let $G = (V, E)$ be an undirected graph.
A \emph{stable set} in $G$ is a vertex set $X \subseteq V$ such that no edge in $E$ has its both end vertices in $X$.
We say that $G$ is \emph{bipartite} if there exists a \emph{bipartition} $(V^+, V^-)$ of $V$ such that $V^+ \cap V^- = \emptyset$, $V^+ \cup V^- = V$, and both $V^+$ and $V^-$ are stable sets in $G$ (possibly $V^+$ or $V^-$ is empty).
To specify the bipartition, we denote by $G = (V^+, V^-; E)$.
For a vertex set $X$ on one side (i.e., $X \subseteq V^+$ or $X \subseteq V^-$), let $\Gamma_G(X)$ denote the \emph{neighborhood} of $X$, i.e., the set of vertices adjacent to some vertex in $X$.
A bipartite graph $G = (V^+, V^-; E)$ is \emph{balanced} if $|V^+| = |V^-|$.

For a digraph $G = (V, E)$, the \emph{underlying graph} is an undirected graph $\overline{G} = (V, \overline{E})$ defined as follows.
For each directed edge $e = (u, v) \in E$, we denote by $\overline{e} = \{u, v\}$ the corresponding undirected edge.
We then define $\overline{F} \coloneqq \{ \overline{e} \mid e \in F \}$ for each $F \subseteq E$.

For each edge $e = \{u, v\}$ in a bipartite graph $G = (V^+, V^-; E)$ such that $u \in V^+$ and $v \in V^-$, we denote its orientations by $\overrightarrow{e} = (u, v)$ and $\overleftarrow{e} = (v, u)$.
Similarly, for each $F \subseteq E$, we define $\overrightarrow{F}$ and $\overleftarrow{F}$ as the sets of the orientations $\overrightarrow{e}$ and $\overleftarrow{e}$, respectively, of the edges $e \in F$.

An undirected graph is said to be \emph{connected} if for each pair of vertices, there exists a path between them.
A digraph is said to be \emph{strongly connected} if the same condition (where a path means a directed path) holds, and just \emph{connected} if the underlying graph is connected.

For an undirected graph $G$ and a vertex $v \in V(G)$, we denote by $\deg_G(v)$ the \emph{degree} of $v$, which is the number of edges incident to $v$.
Similarly, for a digraph $G$ and a vertex $v \in V(G)$, we denote by $\deg_G^\mathrm{in}(v)$ and $\deg_G^\mathrm{out}(v)$ the \emph{in-degree} and \emph{out-degree} of $v$, respectively, which are the numbers of edges entering and leaving $v$.
An \emph{in-arborescence} (or \emph{out-arborescence}) is a connected digraph in which exactly one vertex, called \emph{the root}, has out-degree (or in-degree, respectively) $0$ and all other vertices have out-degree (or in-degree, respectively) $1$.
An in- or out-arborescence having its root $r$ is said to be \emph{rooted at} $r$, and is also called an \emph{$r$-in-arborescence} or an \emph{$r$-out-arborescence}, respectively.

A \emph{matching} in an undirected graph $G$ is an edge subset in which no two edges share a vertex.
If a matching covers all the vertices in $G$, it is called a \emph{perfect matching}.
For a perfect matching $M$ in a balanced bipartite graph $G = (V^+, V^-; E)$, the \emph{auxiliary graph} of $G$ with respect to $M$ is a digraph $G_{M} \coloneqq (V; \overrightarrow{E} \cup \overleftarrow{M})$.

The \emph{DM-decomposition} \cite{dulmage1958coverings,dulmage1959structure} of a bipartite graph is formally defined by the distributive lattice formed by the minimizers of a submodular function, and a bipartite graph is called \emph{DM-irreducible} if its DM-decomposition consists of a single component.
We here skip the formal definition and instead employ its characterization as an alternative ``definition'' (cf.~\cite[Section~2.2]{berczi2018DMImaking}).\footnote{The equivalent condition is also called \emph{elementary}, e.g., in \cite{lovasz1986matching,frank2011connections}.}

\begin{lemma}[cf.~{\cite[Theorem~2.4.18]{frank2011connections}}]\label{lemma: DM-irreducible}
For a balanced bipartite graph $G$, the following statements are equivalent.
\begin{enumerate}
\setlength{\itemsep}{0mm}
\item $G$ is DM-irreducible.
\item $G$ has a perfect matching $M$ such that $G_M$ is strongly connected.
\item $G$ has a perfect matching and $G_M$ is strongly connected for any perfect matching $M$ in $G$.
\end{enumerate}
\end{lemma}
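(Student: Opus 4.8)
The plan is to prove the implications $1 \Rightarrow 3 \Rightarrow 2 \Rightarrow 1$; the middle implication $3 \Rightarrow 2$ is immediate, since statement~3 itself asserts that $G$ has a perfect matching, and any such matching then witnesses statement~2. For the other two implications I would first replace the formal definition of DM-irreducibility by the following combinatorial reformulation, extracted from the distributive lattice of minimizers of the submodular function $X \mapsto |\Gamma_G(X)| - |X|$ on $X \subseteq V^+$ (or, equivalently, simply quoted from the cited source): a balanced bipartite graph $G = (V^+, V^-; E)$ is DM-irreducible if and only if $G$ has a perfect matching and there is no \emph{nontrivial tight set}, i.e.\ no $X$ with $\emptyset \neq X \subsetneq V^+$ and $|\Gamma_G(X)| = |X|$ (or, symmetrically, no such $X$ with $\emptyset \neq X \subsetneq V^-$).

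The bridge between this reformulation and the auxiliary graph is a single structural observation. In $G_M = (V;\, \overrightarrow{E} \cup \overleftarrow{M})$, every arc leaving a vertex of $V^-$ lies in $\overleftarrow{M}$ and hence goes to its $M$-partner in $V^+$, while every arc leaving a vertex of $V^+$ lies in $\overrightarrow{E}$ and goes to a neighbor in $V^-$. Consequently, if a set $U \subseteq V$ has no leaving arc in $G_M$, then: (i) each vertex of $U$ has its $M$-partner in $U$ as well (otherwise the corresponding matching edge, in one of its two orientations, would leave $U$), so $U$ is a union of $M$-edges, and writing $A \coloneqq U \cap V^+$ we have $U \cap V^- = \{\,M\text{-partners of }A\,\}$ and $|U \cap V^-| = |A|$; and (ii) all arcs out of $A$ stay in $U$, i.e.\ $\Gamma_G(A) \subseteq U \cap V^-$, which together with the trivial reverse inclusion gives $\Gamma_G(A) = U \cap V^-$ and thus $|\Gamma_G(A)| = |A|$. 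Conversely, from a nontrivial tight set $A$ and a perfect matching $M$, the $M$-partners of $A$ form a subset of $\Gamma_G(A)$ of size $|A| = |\Gamma_G(A)|$, hence equal to $\Gamma_G(A)$, so $U \coloneqq A \cup \Gamma_G(A)$ is a nonempty proper subset of $V$ with no leaving arc in $G_M$.

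Granting this, the remaining two implications are short. For $2 \Rightarrow 1$: if $G_M$ is strongly connected for some perfect matching $M$, then $G$ has a perfect matching, and a nontrivial tight set would, by the converse part above, yield a nonempty proper $U \subsetneq V$ with no leaving arc, contradicting strong connectivity; hence $G$ is DM-irreducible. For $1 \Rightarrow 3$: DM-irreducibility supplies a perfect matching, and for an arbitrary perfect matching $M$, if $G_M$ were not strongly connected then some nonempty proper $U \subsetneq V$ would have no leaving arc, and by (i)--(ii) the set $A \coloneqq U \cap V^+$ would be a tight set that is moreover nontrivial ($A = \emptyset$ forces $U = \emptyset$ and $A = V^+$ forces $U = V$), contradicting DM-irreducibility; thus $G_M$ is strongly connected for every perfect matching $M$.

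I expect the only genuine obstacle to be the first step: justifying the tight-set reformulation from whichever formal definition of the DM-decomposition one adopts, together with the degenerate-case bookkeeping (an empty or singleton color class, and the fact that DM-irreducibility already forces the existence of a \emph{perfect} matching rather than merely a maximum matching with positive K\H{o}nig-type deficiency). Once that reformulation is available — and it is essentially the content of the theorem cited — the remainder is the elementary arc-tracing argument above.
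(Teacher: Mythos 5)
Your argument is correct, but there is nothing in the paper to compare it against: the authors explicitly \emph{do not} prove Lemma~\ref{lemma: DM-irreducible}. They skip the formal (lattice-of-minimizers) definition of the DM-decomposition entirely and adopt this equivalence as a citable ``alternative definition,'' pointing to Theorem~2.4.18 of Frank's book. What you supply is essentially the textbook proof that the citation hides: you pass through the intermediate characterization ``perfect matching plus no nontrivial tight set $X$ with $|\Gamma_G(X)|=|X|$,'' and then translate tight sets into arc-free cuts of $G_M$ and back. Your bookkeeping is sound in both directions --- in particular the observation that a set $U$ with no leaving arc in $G_M$ must be closed under taking $M$-partners (so $U\cap V^-=\Gamma_G(U\cap V^+)$ has size $|U\cap V^+|$), and conversely that for a tight set $A$ every perfect matching satisfies $M(A)=\Gamma_G(A)$, so $A\cup\Gamma_G(A)$ has no leaving arc --- and the nontriviality checks ($A=\emptyset\Rightarrow U=\emptyset$, $A=V^+\Rightarrow U=V$) are exactly the degenerate cases one needs. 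The one step you defer, deriving the tight-set reformulation from the formal DM-decomposition, is fair to defer: in the balanced case the minimizers of $X\mapsto|\Gamma_G(X)|-|X|$ over $X\subseteq V^+$ are precisely $\emptyset$, $V^+$, and the nontrivial tight sets (when a perfect matching exists), so irreducibility of the decomposition is by construction the absence of the latter; this is no deeper than what the paper itself outsources. In short: the paper buys brevity by citation; your route buys a self-contained proof at the cost of committing to one concrete formalization of the DM-decomposition.
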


Suppose that a graph $G$ is associated with edge weight $w \colon E(G) \to \mathbb{R}$.
We define the \emph{weight} of each edge set $F \subseteq E(G)$ as $w(F) \coloneq \sum_{e \in F} w(e)$.

\subsection{Problems and Our Results}\label{section:2.2}
We formally state the problems dealt with in this paper.
Recall that \textsc{DMISS} for unbalanced bipartite graphs is polynomial-time solvable~\cite{berczi2018DMImaking}, which is not considered here.

\begin{problem}[\textsc{Strongly Connected Spanning Subgraph (SCSS)}]
\mbox{ }\vspace{-1mm}
\begin{description}
\setlength{\itemsep}{0mm}
\item[Input:] A strongly connected digraph $G$ with $w \colon E(G)  \to \mathbb{R}_{\ge 0}$.
\item[Goal:] Find a minimum-weight strongly connected spanning subgraph of $G$.
\end{description}
\end{problem}

\smallskip

\begin{problem}[\textsc{DM-Irreducible Spanning Subgraph (DMISS)}]
\mbox{ }\vspace{-1mm}
\begin{description}
\setlength{\itemsep}{0mm}
\item[Input:] A DM-irreducible balanced bipartite graph $G$ with $w \colon  E(G)  \to \mathbb{R}_{\ge 0}$.
\item[Goal:] Find a minimum-weight DM-irreducible spanning subgraph of $G$.
\end{description}
\end{problem}

It is easy to observe that any inclusion-wise minimal strongly connected spanning subgraph of a digraph on $n$ vertices consists of at most $2n - 2$ edges (by Lemma~\ref{lemma: SC includes arborescence}).
Similarly, any inclusion-wise minimal DM-irreducible spanning subgraph of a balanced bipartite graph on $2n$ vertices ($n$ vertices on each side) consists of at most $3n - 2$ edges (by Lemma~\ref{lemma: DMI includes arborescence}).
The following problems are the unweighted version of the above two problems parameterized by the difference from these trivial upper bounds; note that it is obvious from the definition that $|V(G)|$ edges are necessary in either problem.

\begin{problem}[\textsc{UnweightedSCSS}]
\mbox{ }\vspace{-1mm}
\begin{description}
\setlength{\itemsep}{0mm}
\item[Input:] A strongly connected digraph $G$ on $n$ vertices.
\item[Parameter:] An integer $k \le n - 2$.
\item[Goal:] Test whether $G$ has a strongly connected spanning subgraph with at most $2n - 2 - k$ edges or not.
\end{description}
\end{problem}

\smallskip

\begin{problem}[\textsc{UnweightedDMISS}]
\mbox{ }\vspace{-1mm}
\begin{description}
\setlength{\itemsep}{0mm}
\item[Input:] A DM-irreducible balanced bipartite graph $G$ on $2n$ vertices.
\item[Parameter:] An integer $k \le n - 2$.
\item[Goal:] Test whether $G$ has a DM-irreducible spanning subgraph with at most $3n - 2 - k$ edges or not.
\end{description}
\end{problem}

The results of this paper are summarized as follows, where $n = \frac{1}{2}|V(G)|$.

\begin{theorem}\label{thm:approx}
    There exists a $2$-approximation algorithm for \textsc{DMISS} that runs in $O(n^3)$ time.
\end{theorem}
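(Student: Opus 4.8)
The plan is to adapt the approach of Frederickson and J{\'a}j{\'a}~\cite{frederickson1981approximation} for \textsc{SCSS}, gluing together an out-arborescence and an in-arborescence taken with respect to a \emph{common} perfect matching and a \emph{common} root. Set $n=\frac12|V(G)|$, and for a perfect matching $M$ of $G$ let $D_M$ be the $n$-vertex digraph obtained from $G_M$ by identifying the two end vertices of each edge of $M$; its arcs are the orientations $\overrightarrow{E(G)\setminus M}$, so they correspond bijectively to the non-matching edges of $G$. Since $G$ is DM-irreducible it is connected. A naive adaptation already yields a $3$-approximation: a minimum-weight perfect matching $M$ of $G$ satisfies $w(M)\le w(O)$ for an optimal solution $O$ (as $O$ is DM-irreducible, it contains a perfect matching), running the algorithm of~\cite{frederickson1981approximation} on $D_M$ returns a strongly connected spanning subgraph $F$ of $D_M$ of weight at most twice the minimum such weight $\mathrm{scss}(D_M)$, and $M\cup F$ (read back in $G$) is DM-irreducible by Lemma~\ref{lemma: DM-irreducible}; together with the bound $\mathrm{scss}(D_M)\le w(O)$ proved below, its weight is at most $w(M)+2\,\mathrm{scss}(D_M)\le 3w(O)$. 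To save the extra $w(M)$, I would instead compute the matching jointly with one arborescence by matroid intersection.

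Call a spanning subgraph $H$ of $G=(V^+,V^-;E)$ a \emph{DM-out-arborescence} if $H$ is a spanning tree of $G$ in which exactly one vertex of $V^-$ has degree $1$ and all other vertices of $V^-$ have degree $2$. I would verify three facts. (i) As $G$ is connected, its cycle matroid has rank $2n-1$; any common independent set of size $2n-1$ of this cycle matroid and the partition matroid that caps at $2$ the number of selected edges incident to each $v\in V^-$ is a DM-out-arborescence, the degree pattern being forced (a $(2n-1)$-edge spanning tree with all $V^-$-degrees at most $2$ has exactly one $V^-$-vertex of degree $1$). (ii) A DM-out-arborescence $H$ has a unique perfect matching $M$ (a tree has at most one), and if $r$ is the $M$-edge incident to the degree-$1$ vertex of $V^-$, then identifying end vertices along $M$ turns $\overrightarrow{H\setminus M}$ into an $r$-out-arborescence of $D_M$. (iii) Conversely, for any perfect matching $M$ of $G$ and any vertex $r$ of $D_M$, the union of $M$ with the edges underlying an $r$-out-arborescence of $D_M$ is a DM-out-arborescence. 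By Lemma~\ref{lemma: DM-irreducible}, $G$ has a perfect matching $M$ with $D_M$ strongly connected, so DM-out-arborescences exist; let $H_1$ be a minimum-weight one (found by weighted matroid intersection), with perfect matching $M_1$ and root $r_1$.

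Since $D_{M_1}$ is strongly connected by Lemma~\ref{lemma: DM-irreducible}, it has an $r_1$-in-arborescence; let $F$ be a minimum-weight one, obtained by a minimum-weight arborescence algorithm, and output $H:=H_1\cup F$. As $M_1\subseteq H$ is a perfect matching, identifying end vertices along $M_1$ in $H_{M_1}$ yields a subgraph of $D_{M_1}$ containing both the $r_1$-out-arborescence $H_1\setminus M_1$ and the $r_1$-in-arborescence $F$, hence strongly connected; so $H_{M_1}$ is strongly connected and $H$ is DM-irreducible by Lemma~\ref{lemma: DM-irreducible}. For the approximation ratio, an optimal $O$ has a perfect matching $M^{*}$ with $O_{M^{*}}$ strongly connected (Lemma~\ref{lemma: DM-irreducible}), so $O$ contains a DM-out-arborescence by (iii) and thus $w(H_1)\le w(O)$ by nonnegativity of $w$. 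The crucial point for $w(F)$ is that $\mathrm{scss}(D_{M})\le w(O)$ for \emph{every} perfect matching $M$ of $G$: indeed $O\cup M$ is DM-irreducible (DM-irreducibility is preserved under adding edges) and has $M$ as a perfect matching, so $(O\cup M)_{M}$ is strongly connected by Lemma~\ref{lemma: DM-irreducible}, and identifying end vertices along $M$ yields a strongly connected spanning subgraph of $D_{M}$ whose arcs come only from $O\setminus M$, of weight at most $w(O)$. A minimum-weight strongly connected spanning subgraph of $D_{M_1}$ contains an $r_1$-in-arborescence, so $w(F)\le\mathrm{scss}(D_{M_1})\le w(O)$; therefore $w(H)\le w(H_1)+w(F)\le 2w(O)$.

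For the running time, the minimum-weight arborescence step costs $O(n^2)$ and is dominated by the weighted matroid intersection, which on a ground set of $O(n^2)$ edges of rank $O(n)$ with the elementary oracles of a cycle and a partition matroid can be carried out within $O(n^3)$. I expect the main obstacle to be the bound $\mathrm{scss}(D_{M})\le w(O)$ for all perfect matchings $M$ together with the fact that $w(F)$ is controlled by $\mathrm{scss}(D_{M_1})$: this is exactly what lets the matroid-intersection solution absorb the cost of its own matching and turns the $3$-approximation into a $2$-approximation. A secondary point requiring care is pinning down the matroid model in (i) so that a common basis is \emph{precisely} a DM-out-arborescence, with no spurious side conditions.
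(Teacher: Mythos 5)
Your proposal is correct and follows essentially the same route as the paper: your DM-out-arborescences are exactly the paper's ``strongly balanced spanning trees'' with the roles of $V^+$ and $V^-$ (and hence of in- and out-arborescences) interchanged, contracting the matching to form $D_M$ is equivalent to working in the auxiliary digraph $G_M$ directly, and your two bounds $w(H_1)\le w(O)$ and $w(F)\le \mathrm{scss}(D_{M_1})\le w(O)$ are precisely the paper's two claims. Two small points to tighten: the \emph{existence} (not just uniqueness) of a perfect matching in a DM-out-arborescence still requires a short Hall-type counting argument, and the $O(n^3)$ bound does not follow from generic weighted matroid intersection on a rank-$O(n)$ ground set of size $O(n^2)$ --- the paper obtains it from the specialized algorithm of Brezovec, Cornu{\'e}jols, and Glover for intersecting a cycle matroid with a partition matroid induced by a stable set, and some such specialized routine is needed to justify the stated running time.
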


\begin{theorem}\label{thm:FPT}
    There exists an FPT algorithm for \textsc{UnweightedDMISS}. 
\end{theorem}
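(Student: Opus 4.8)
The plan is to transfer the fixed-parameter algorithm of Bang-Jensen and Yeo~\cite{bang2008minimum} for \textsc{UnweightedSCSS} (reviewed in Section~\ref{section:4.1}) to \textsc{UnweightedDMISS}, using Lemma~\ref{lemma: DM-irreducible} as a dictionary. The first step is a reformulation. A spanning subgraph $H \subseteq E(G)$ is DM-irreducible if and only if it contains a perfect matching $M$ such that the sub-digraph $(V;\overrightarrow{H}\cup\overleftarrow{M})$ of $G_M$ is strongly connected, and by Lemma~\ref{lemma: DM-irreducible}(3) this holds for one such $M$ iff it holds for every perfect matching of $H$. Contracting, in $G_M$, each matched pair $\{v,M(v)\}$ into a single vertex yields a digraph $\widehat{G}_M$ on $n$ vertices, which is strongly connected whenever $G$ is DM-irreducible (again by Lemma~\ref{lemma: DM-irreducible}), and one checks directly that, under this contraction, a DM-irreducible spanning subgraph $H$ with matching $M$ corresponds to a strongly connected spanning subgraph of $\widehat{G}_M$ with exactly $|H|-n$ arcs; the subtracted $n$ is the matching itself, and it carries precisely the gap between the trivial bound $3n-2$ for \textsc{DMISS} and the trivial bound $2n-2$ for \textsc{SCSS} on the $n$-vertex digraph $\widehat{G}_M$. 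Hence $(G,k)$ is a YES-instance of \textsc{UnweightedDMISS} if and only if some perfect matching $M$ of $G$ makes $(\widehat{G}_M,k)$ a YES-instance of \textsc{UnweightedSCSS}.

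With this in hand the task splits into (i) kernelizing and (ii) coping with all perfect matchings at once. For (i) I would design reduction rules directly on the bipartite graph $G$ that locate \emph{forced} edges — edges lying in every DM-irreducible spanning subgraph — and eliminate them. The basic observation, using that a DM-irreducible balanced bipartite graph has minimum degree at least $2$, is that the two edges at any degree-$2$ vertex are forced, so a maximal chain of degree-$2$ vertices forces an alternating path in every solution; such a path can be contracted to a single edge (with a new degree-$2$ vertex inserted in the parity-odd case), keeping the graph balanced bipartite, strictly decreasing $n$, and changing $k$ in a controlled way dictated by the arithmetic of the $3n-2$ bound and the number of matching edges swallowed. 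This mirrors, at the level of matched pairs, the forced-$2$-cycle and forced-subdivision reductions of \cite{bang2008minimum}. Applying the rules exhaustively either decides the instance (when $k$ drops to a value for which the trivial bound already suffices) or leaves a \emph{reduced} instance, for which I would re-establish the structural (kernelization) lemma of \cite{bang2008minimum}: a reduced YES-instance has at most $h(k)$ vertices for some computable $h$. For (ii), on this bounded-size reduced graph I would simply enumerate all of its perfect matchings $M$ — there are at most $h(k)^{O(h(k))}$ of them — and, for each, decide \textsc{UnweightedSCSS} on $\widehat{G}_M$ by brute force over its spanning sub-digraphs; this is in contrast with the approximation algorithm of Theorem~\ref{thm:approx}, where a good matching is produced via matroid intersection rather than enumerated.

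The crux is the structural lemma. The proof in \cite{bang2008minimum} rests on an ear-decomposition count: a minimal strongly connected spanning sub-digraph on $N$ vertices with at most $2N-2-k$ arcs has an ear decomposition whose total ``excess'' over subdivided single edges is bounded in terms of $k$, so its branching part spans only $O(k)$ vertices while the rest is a union of internally vertex-disjoint paths, which are exactly what the reductions collapse. Carrying this out for \textsc{DMISS} means reading the ear decomposition on the contracted digraph $\widehat{G}_M$, whose internal path-vertices are matched pairs, so that a ``subdivided single edge'' becomes an alternating path in $G$; one must check that the bipartite reduction rules really do eliminate all such alternating paths, and one must keep track of the fact that the two arcs incident to a super-vertex of $\widehat{G}_M$ touch different colour classes of $G$ — i.e., whether an incident edge belongs to $M$ selects which ``port'' of the pair it uses — so that both the reductions and the count are port-sensitive. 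The remaining delicacy is that $\widehat{G}_M$ genuinely depends on $M$, yet the reductions must be carried out on $G$ before $M$ is known; this is reconciled by invoking Lemma~\ref{lemma: DM-irreducible}(3) throughout, which lets us reason about DM-irreducibility of $H$ without committing to a witness matching and, in particular, lets each contraction preserve both the existence of a perfect matching and the solution correspondence. Granting the lemma, the algorithm — polynomial-time reductions, an $h(k)$-vertex kernel, and a search depending only on $k$ — runs in $f(k)\cdot n^{O(1)}$ time, proving Theorem~\ref{thm:FPT}.
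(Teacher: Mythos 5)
Your high-level intent matches the paper's (extend Bang-Jensen--Yeo), and your opening reformulation is sound: contracting matched pairs does turn a DM-irreducible spanning subgraph $H$ with perfect matching $M$ into a strongly connected spanning subgraph of the contracted digraph with $|H|-n$ arcs, and the parameter $k$ is preserved exactly as you compute. But the proposal has a genuine gap at the step you yourself call the crux. Your kernelization is based on degree: you contract maximal chains of degree-$2$ vertices of $G$ and claim the residue is bounded by $h(k)$. This fails, because the vertices that must be eliminated are not degree-$2$ vertices of $G$. In the paper's argument (and already in Bang-Jensen--Yeo's), one takes an ear decomposition with an inclusion-wise maximal set of \emph{long} ears; the vertices $Y$ outside the long ears are inner vertices of length-$3$ ears, and while $E(G[Y])$ is a perfect matching (the paper's Claim~\ref{claim: perfect matching}), each such vertex may have arbitrarily many edges into $X$. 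Your reduction rules never fire on these vertices, so the ``reduced instance'' is not bounded in terms of $k$, and no kernel is obtained. What is actually needed — and what the paper supplies — is an exchange argument (Claim~\ref{claim:DMISS claim}) showing that some \emph{minimum} solution attaches each such pair $\tilde y=\{y^+,y^-\}$ by a canonical length-$3$ path anchored at a pair $\tilde z\subseteq X$. Making that exchange consistent across all pairs simultaneously requires choosing \emph{distinct} anchors, which the paper arranges via the auxiliary bipartite graph $\tilde H$, the maximal deficiency set $\Zout$, and a Hall-theorem subgraph giving each $\tilde z\in\Zin$ two private partners; the proof then performs explicit surgery on a perfect matching $M'$ of the optimal solution to produce $M^\sharp$. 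None of this is present in, or implied by, your sketch.

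Relatedly, your resolution of the ``$\widehat{G}_M$ depends on $M$'' problem by ``invoking Lemma~\ref{lemma: DM-irreducible}(3) throughout'' does not work: that lemma says every perfect matching of a fixed DM-irreducible graph witnesses strong connectivity, but it gives no relation between the contracted digraphs $\widehat{G}_M$ for different $M$, and it does not license performing reductions before a matching is fixed. The paper sidesteps matching enumeration entirely: after Claim~\ref{claim:DMISS claim}, it brute-forces spanning subgraphs of $G-Y'$, whose vertex count is at most $4k(4k+1)$, testing DM-irreducibility of each candidate directly. To repair your write-up you would need to replace the degree-based reductions with the maximal-long-ear decomposition, prove the analogue of Claim~\ref{claim:DMISS maximum} to bound $|X|$ by $O(k)$, and carry out the Hall-type anchor-assignment and matching surgery; as it stands the central lemma is assumed rather than proved.
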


\section{Approximation Algorithms for \textsc{DMISS}}\label{section:3}
In this section, we prove Theorem~\ref{thm:approx} by presenting a $2$-approximation algorithm for \textsc{DMISS} that runs in $O(n^3)$ time.

In Section~\ref{subsection:3.1}, as a base of our result, we review the $2$-approximation algorithm for \textsc{SCSS} proposed by Frederickson and J{\'a}j{\'a}~\cite{frederickson1981approximation}.
In Section~\ref{subsection:3.2}, we show a $3$-approximation algorithm for \textsc{DMISS} obtained by a naive extension of the result of Frederickson and J{\'a}j{\'a}.
In Section~\ref{subsection:3.3}, we improve it to a $2$-approximation algorithm with the aid of matroid intersection.

\subsection{2-approximation Algorithm for \textsc{SCSS}}\label{subsection:3.1}
We start with an elementary characterization of strongly connected digraphs.

\begin{lemma}\label{lemma: SC includes arborescence}
For a digraph $G$, the following statements are equivalent.
\begin{enumerate}
\setlength{\itemsep}{0mm}
\item $G$ is strongly connected.
\item For some vertex $r$, $G$ has a spanning in-arborescence and out-arborescence rooted at $r$.
\item For any vertex $r$, $G$ has a spanning in-arborescence and out-arborescence rooted at $r$.
\end{enumerate}
\end{lemma}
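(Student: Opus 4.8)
The plan is to establish the cyclic chain $3 \Rightarrow 2 \Rightarrow 1 \Rightarrow 3$. The implication $3 \Rightarrow 2$ is immediate whenever $V(G) \neq \emptyset$ (the degenerate empty graph being handled by convention), so the substance lies in the remaining two implications.

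For $2 \Rightarrow 1$, suppose $G$ contains a spanning out-arborescence $T_{\mathrm{out}}$ and a spanning in-arborescence $T_{\mathrm{in}}$, both rooted at $r$. Unwinding the definitions, in $T_{\mathrm{out}}$ there is a directed path from $r$ to every vertex $v$ (follow the edges, each directed away from the root), and dually in $T_{\mathrm{in}}$ there is a directed path from every $v$ to $r$. Given any ordered pair $(u, v)$ of vertices, concatenating the $u$-to-$r$ path in $T_{\mathrm{in}}$ with the $r$-to-$v$ path in $T_{\mathrm{out}}$ produces a directed walk from $u$ to $v$, which contains a directed path from $u$ to $v$; hence $G$ is strongly connected.

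For $1 \Rightarrow 3$, fix an arbitrary vertex $r$. Since $G$ is strongly connected, every vertex is reachable from $r$ along directed edges, so a graph search (BFS or DFS) from $r$ that traverses edges only in their forward direction reaches all of $V(G)$; assigning to each non-root vertex the edge from its search-tree parent yields a spanning out-arborescence rooted at $r$ (in this tree the root has in-degree $0$ and every other vertex has in-degree exactly $1$). For the spanning in-arborescence, run the same construction in the reverse digraph obtained from $G$ by reversing every edge---which is strongly connected as well---and observe that a spanning out-arborescence rooted at $r$ there is exactly a spanning in-arborescence rooted at $r$ in $G$.

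I expect no real obstacle here: the statement is folklore and the argument elementary. The only points requiring a little care are keeping the orientation conventions straight (an out-arborescence has all edges pointing away from the root, so it is the in-degrees, not the out-degrees, that the search tree controls), recalling that every walk contains a path with the same endpoints, and fixing a convention for one- or zero-vertex graphs (a single vertex carrying the edgeless arborescence is strongly connected by convention). The reverse-digraph trick is the one mildly slick ingredient, sparing us from repeating the search-tree construction verbatim for in-arborescences.
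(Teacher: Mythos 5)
Your proof is correct. Note that the paper itself states this lemma without proof, treating it as an elementary and well-known characterization of strong connectivity, so there is no in-paper argument to compare against; your chain $3 \Rightarrow 2 \Rightarrow 1 \Rightarrow 3$, with the search-tree construction for $1 \Rightarrow 3$ and the reverse-digraph trick for the in-arborescence, is exactly the standard folklore argument the authors are implicitly relying on, and all the steps (including the observation that every directed walk contains a directed path with the same endpoints) check out.
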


For the input $(G, w)$ of \textsc{SCSS}, the algorithm arbitrarily chooses a vertex $r$ and finds a minimum-weight spanning in-arborescence $\Ain$ and out-arborescence $\Aout$ rooted at $r$; the existence of $\Ain$ and $\Aout$ follows from Lemma~\ref{lemma: SC includes arborescence} $(1 \Rightarrow 3)$.
This can be done in $O(m + n \log n)$ time~\cite{gabow1986efficient}, where $n = |V(G)|$ and $m = |E(G)|$.
By Lemma \ref{lemma: SC includes arborescence} $(2 \Rightarrow 1)$, the union of $\Ain$ and $\Aout$ results in a strongly connected spanning subgraph of $G$, and the algorithm outputs this.
Since an optimal solution has a spanning in-arborescence and out-arborescence rooted at $r$ by Lemma \ref{lemma: SC includes arborescence} $(1 \Rightarrow 3)$ and their weights are at least those of $\Ain$ and $\Aout$, respectively, the weight of the output solution is at most twice the optimal value.

\subsection{3-approximation Algorithm for \textsc{DMISS}}\label{subsection:3.2}
A naive extension of the above result leads to a $3$-approximation algorithm for \textsc{DMISS}.
First, Lemma \ref{lemma: SC includes arborescence} is straightforwardly extended as follows (recall Lemma~\ref{lemma: DM-irreducible}).

\begin{lemma} \label{lemma: DMI includes arborescence}
For a balanced bipartite graph $G$, the following statements are equivalent.
\begin{enumerate}
\setlength{\itemsep}{0mm}
\item $G$ is DM-irreducible.
\item $G$ has a perfect matching $M$ and a vertex $r$ such that $G_M$ has a spanning in-arborescence and out-arborescence rooted at $r$.
\item $G$ has a perfect matching, and for any perfect matching $M$ and any vertex $r$, $G_M$ has a spanning in-arborescence and out-arborescence rooted at $r$.
\end{enumerate}
\end{lemma}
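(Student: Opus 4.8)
The plan is to obtain the three-way equivalence by composing the two characterizations already available: Lemma~\ref{lemma: DM-irreducible}, which relates DM-irreducibility of $G$ to strong connectivity of the auxiliary digraph $G_M$, and Lemma~\ref{lemma: SC includes arborescence}, which relates strong connectivity of a digraph to the existence of spanning in- and out-arborescences. The auxiliary digraph $G_M = (V; \overrightarrow{E} \cup \overleftarrow{M})$ serves as the bridge, and since $V(G_M) = V^+ \cup V^-$, the word ``spanning'' has the same meaning here as in Lemma~\ref{lemma: SC includes arborescence}.

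For $(1 \Rightarrow 3)$, I would assume $G$ is DM-irreducible. By Lemma~\ref{lemma: DM-irreducible} $(1 \Rightarrow 3)$, $G$ has a perfect matching and, for every perfect matching $M$ of $G$, the digraph $G_M$ is strongly connected. Fixing any such $M$ and any vertex $r \in V(G_M)$ and applying Lemma~\ref{lemma: SC includes arborescence} $(1 \Rightarrow 3)$ to $G_M$ then yields a spanning $r$-in-arborescence and a spanning $r$-out-arborescence in $G_M$, which is exactly statement~3. The implication $(3 \Rightarrow 2)$ is immediate: statement~3 asserts that $G$ has a perfect matching, so one may pick such an $M$ together with an arbitrary vertex $r$, and the arborescence condition for this particular pair is a special case of the universally quantified one. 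Finally, for $(2 \Rightarrow 1)$, let $M$ and $r$ be as in statement~2; applying Lemma~\ref{lemma: SC includes arborescence} $(2 \Rightarrow 1)$ to $G_M$ shows that $G_M$ is strongly connected, and then Lemma~\ref{lemma: DM-irreducible} $(2 \Rightarrow 1)$ shows that $G$ is DM-irreducible, closing the cycle $1 \Rightarrow 3 \Rightarrow 2 \Rightarrow 1$.

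There is essentially no genuine obstacle here beyond bookkeeping; the only point to handle with a little care is the step $(3 \Rightarrow 2)$, where one must explicitly invoke the ``$G$ has a perfect matching'' clause of statement~3 to guarantee that a pair $(M, r)$ exists at all, since the arborescence clause alone would be vacuous otherwise. Everything else is a direct transfer of the corresponding equivalences through the auxiliary digraph $G_M$.
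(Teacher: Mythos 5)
Your proposal is correct and matches the paper's intent exactly: the paper gives no separate proof, noting only that the lemma is a straightforward extension of Lemma~\ref{lemma: SC includes arborescence} obtained by recalling Lemma~\ref{lemma: DM-irreducible}, which is precisely the composition of equivalences you carry out. The cycle $1 \Rightarrow 3 \Rightarrow 2 \Rightarrow 1$ and the careful handling of the perfect-matching clause in $(3 \Rightarrow 2)$ are all in order.
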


To achieve $3$-approximation for \textsc{DMISS}, we simply divide the problem into two tasks: finding a minimum-weight perfect matching $M$ in $(G, w)$ and solving \textsc{SCSS} with the input $(G_M, w_M)$, where $w_M \colon E(G_M) \to \mathbb{R}_{\geq 0}$ is defined as
\begin{align}
w_M(e) \coloneqq \begin{cases}
w(\overline{e}) & (\overline{e} \not\in M),\\
0 & (\overline{e} \in M).
\end{cases}\label{eq:w_M}
\end{align}
For the first task, polynomial-time exact algorithms initiated by the Hungarian method~\cite{kuhn1955hungarian} are well-known.
For the second task, we apply the $2$-approximation algorithm described in the previous section.
We then output the union of the solutions (for the second solution, we take the set of corresponding edges in $G$).
This intuitively leads to a $3$-approximation algorithm for \textsc{DMISS}, which can be formally shown as follows.

\begin{lemma}\label{lemma: 3-approx}
There exists a polynomial-time $3$-approximation algorithm for \textsc{DMISS}.
\end{lemma}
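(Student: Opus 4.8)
The plan is to fix an optimal solution $\Sopt$ of \textsc{DMISS}, write $\mathrm{OPT} = w(\Sopt)$, and bound the weight of the algorithm's output against $\mathrm{OPT}$. Let $M$ be the minimum-weight perfect matching found in the first task, let $H$ be the strongly connected spanning subgraph of $G_M$ returned by the \textsc{SCSS} $2$-approximation in the second task, and let $F \subseteq E(G)$ be the set of edges of $G$ underlying $H$, so that the output is $M \cup F$.

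First I would settle feasibility. Since $G$ is DM-irreducible, $G_M$ is strongly connected by Lemma~\ref{lemma: DM-irreducible}~$(1 \Rightarrow 3)$ and $w_M \ge 0$, so the second task is a legitimate \textsc{SCSS} instance. The output $M \cup F$ has $M$ as a perfect matching, and $(M \cup F)_M$ contains $H$ as a spanning subgraph, hence is strongly connected; Lemma~\ref{lemma: DM-irreducible}~$(2 \Rightarrow 1)$ then shows $M \cup F$ is DM-irreducible, so the output is valid.

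The substance of the proof is the weight bound, and the single point requiring care is that $\Sopt$ need not contain the chosen matching $M$, so one cannot directly compare $H$ with the auxiliary graph of $\Sopt$. I would route around this via $\Sopt \cup M$: it is DM-irreducible, since DM-irreducibility is closed under adding edges, and it has $M$ as a perfect matching, so by Lemma~\ref{lemma: DM-irreducible}~$(1 \Rightarrow 3)$ the digraph $(\Sopt \cup M)_M$ is a strongly connected spanning subgraph of $G_M$, whose $w_M$-weight is exactly $w(\Sopt \setminus M) \le \mathrm{OPT}$ because matching edges cost $0$ under~\eqref{eq:w_M}. Hence the optimum of the \textsc{SCSS} instance $(G_M, w_M)$ is at most $\mathrm{OPT}$, and the $2$-approximation guarantee yields $w_M(H) \le 2\,\mathrm{OPT}$. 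Writing $M \cup F$ as the disjoint union $M \cup (F \setminus M)$ and observing that $w(F \setminus M) = w_M(H)$ — a non-matching edge of $F$ occurs in $H$ only in its forward orientation and at its own $w$-cost, while matching edges are free — gives $w(M \cup F) = w(M) + w_M(H)$. Finally $w(M) \le \mathrm{OPT}$, since $\Sopt$ contains some perfect matching by Lemma~\ref{lemma: DM-irreducible} and $M$ has minimum weight among perfect matchings of $G$, so $w(M \cup F) \le 3\,\mathrm{OPT}$. The running time is polynomial because a minimum-weight perfect matching and the \textsc{SCSS} $2$-approximation are both computable in polynomial time. The only genuine subtlety I anticipate is the bookkeeping between $w_M$ and $w$, in particular keeping the weight of $M$ from being double-counted when passing from $H$ back to $M \cup F$.
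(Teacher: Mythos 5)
Your proof is correct and follows essentially the same route as the paper: the decisive device in both is to pass to $\Sopt \cup M$, which is DM-irreducible and contains $M$ as a perfect matching, so that $(\Sopt\cup M)_M$ certifies the \textsc{SCSS} instance $(G_M,w_M)$ has optimum at most $w(\Sopt)$. The only cosmetic difference is that you invoke the $2$-approximation guarantee for \textsc{SCSS} as a black box, while the paper unfolds it into the in- and out-arborescences $\Ain,\Aout$ and compares each against its counterpart in $(\Sopt\cup M)_M$; the underlying argument is the same.
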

\begin{proof}
Let $M$ be a minimum-weight perfect matching in $(G, w)$ (for the first task), and $\Ain$ and $\Aout$ be a minimum-weight spanning in-arborescence and out-arborescence in $(G_M, w_M)$ (for the second task), respectively.
Also, let $\Sopt$ be an optimal solution and $\Mopt$ be a perfect matching in $\Sopt$. 
We then have $w(M)\le w(\Mopt) \le w(\Sopt)$.

Consider a subgraph $G' = (V^+, V^-; \Sopt \cup M)$ of $G$, which is DM-irreducible as so is $\Sopt$.
By Lemma~\ref{lemma: DMI includes arborescence}, the auxiliary graph $G'_M$ has a spanning in-arborescence and out-arborescence rooted at any vertex.
Let $r$ be the root of $\Ain$ and $\Aout$, and $\Ain'$ and $\Aout'$ be any spanning in-arborescence and out-arborescence of $G'_M$ rooted at $r$, respectively.
We then have
\begin{align*}
    w((\overline{\Ain} \cup \overline{\Aout}) \setminus M) &= w_M(\Ain \cup \Aout) && \text{(definition of $w_M$)}\\
    &\le w_M(\Ain) + w_M(\Aout) && \text{($w_M$ is nonnegative)}\\
    &\le w_M(\Ain') + w_M(\Aout') && \text{($G'_M$ is a subgraph of $G_M$)}\\
    &= w(\overline{\Ain'} \setminus M) + w(\overline{\Aout'} \setminus M) && \text{(definition of $w_M$)}\\
    &\le 2w(\Sopt) && (E(G') \setminus M \subseteq \Sopt).
\end{align*}
Thus, the weight of the output solution is 
\[w(M \cup \overline{\Ain} \cup \overline{\Aout}) = w(M) + w((\overline{\Ain} \cup \overline{\Aout}) \setminus M) \le 3w(\Sopt). \qedhere\]
\end{proof}

\subsection{2-approximation Algorithm for DMISS}\label{subsection:3.3}
We improve the above $3$-approximation algorithm with the aid of matroid intersection.

We first give an easy but important observation.

\begin{lemma}\label{lemma:matching in arborescence}
    Let $G = (V^+, V^-; E)$ be a DM-irreducible balanced bipartite graph, $M$ be a perfect matching in $G$, and $r \in V^+$.
    Then, any spanning $r$-in-arborescence of $G_M$ contains all the edges in $\overleftarrow{M}$.
\end{lemma}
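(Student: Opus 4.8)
The key is that in the auxiliary graph $G_M = (V; \overrightarrow{E} \cup \overleftarrow{M})$, the reverse-matching edges $\overleftarrow{M}$ are exactly the edges leaving $V^-$, so they have a forced role in any in-arborescence rooted on the $V^+$ side. More precisely, I would first observe the structure of edges at each vertex of $G_M$: for $v \in V^-$, every edge of $\overrightarrow{E}$ incident to $v$ enters $v$ (since those edges go from $V^+$ to $V^-$), and the unique edge of $\overleftarrow{M}$ at $v$ leaves $v$; hence $\deg^{\mathrm{out}}_{G_M}(v) = 1$ with that single out-edge being the matching edge $\overleftarrow{e}$ where $e \in M$ covers $v$. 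Symmetrically, for $u \in V^+$, the unique edge of $\overleftarrow{M}$ at $u$ enters $u$, and all $\overrightarrow{E}$-edges at $u$ leave $u$.

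Next I would use the defining property of a spanning $r$-in-arborescence $A$: it has $|V| = 2n$ vertices, $2n - 1$ edges, the root $r$ has out-degree $0$ in $A$, and every non-root vertex has out-degree exactly $1$ in $A$. Take any $v \in V^-$. Since $r \in V^+$, $v \neq r$, so $v$ has out-degree $1$ in $A$; but $v$ has out-degree $1$ in the whole graph $G_M$, with its unique out-edge being the matching edge $\overleftarrow{e_v} \in \overleftarrow{M}$. Therefore $A$ must contain $\overleftarrow{e_v}$. Ranging over all $v \in V^-$ gives all $n$ edges of $\overleftarrow{M}$, which is exactly the claim.

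I do not anticipate a serious obstacle here; the statement is essentially immediate once the out-degree count in $G_M$ at $V^-$-vertices is made explicit. The only point requiring a little care is to confirm that a spanning $r$-in-arborescence of $G_M$ does exist (so the statement is not vacuous) and that its non-root vertices genuinely have out-degree exactly $1$ rather than at most $1$ — this follows from Lemma~\ref{lemma: DMI includes arborescence} applied to the DM-irreducible graph $G$ and from the definition of in-arborescence given in the preliminaries. It is also worth noting explicitly that the hypothesis $r \in V^+$ is essential: if $r$ were in $V^-$ it could be the one $V^-$-vertex whose matching out-edge is absent from $A$, and the conclusion would fail.
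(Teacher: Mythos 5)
Your proof is correct and follows essentially the same route as the paper: both arguments observe that in $G_M$ the unique edge leaving each vertex $v \in V^-$ is its reversed matching edge, and since $r \in V^+$ every such $v$ is a non-root vertex with out-degree exactly $1$ in the in-arborescence, forcing all of $\overleftarrow{M}$ to be included. Your additional remarks on non-vacuity and on the necessity of $r \in V^+$ are sound but not needed for the claim.
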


\begin{proof}
Let $\Ain$ be a spanning $r$-in-arborescence of $G_M$.
Then, for any vertex $v \in V^-$, we have $\deg_{\Ain}^\mathrm{out}(v) = 1$. 
In $G_M = (V^+,V^-; \overrightarrow{E} \cup \overleftarrow{M})$, exactly one edge leaves each vertex in $V^-$, which is in $\overleftarrow{M}$.
Thus, $\Ain$ must contain all the edges in $\overleftarrow{M}$.
\end{proof}

In the previous algorithm, we essentially solve three problems: finding a minimum-weight perfect matching, a minimum-weight spanning in-arborescence, and a minimum-weight spanning out-arborescence.
Based on Lemma~\ref{lemma:matching in arborescence}, we merge the first two into a single task as follows: finding a minimum-weight spanning tree $T$ in $(G, w)$ among those having a perfect matching $M$ such that $\overrightarrow{T \setminus M} \cup \overleftarrow{M}$ is a spanning in-arborescence of $G_M$.

The main observation is the following characterization of such spanning trees.

\begin{lemma}\label{lemma:STPM iff PM and orientation}
  For a spanning tree $T$ of a balanced bipartite graph $G$, the following statements are equivalent.
  \begin{enumerate}
  \setlength{\itemsep}{0mm}
      \item $T$ has a perfect matching $M$ in $G$ and $\overrightarrow{T \setminus M} \cup \overleftarrow{M}$ is a spanning in-arborescence of $G_M$ (rooted at some vertex in $V^+$).
      \item There exists a vertex $r \in V^+$ such that $\deg_T(r) = 1$ and $\deg_T(v) = 2$ for every $v \in V^+ \setminus \{r\}$.
  \end{enumerate}
\end{lemma}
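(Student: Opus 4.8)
The plan is to establish the two directions separately, using the structural correspondence between perfect matchings and in-arborescences in the auxiliary graph $G_M$, together with a degree count.

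For the direction $(1 \Rightarrow 2)$, suppose $T$ has a perfect matching $M$ such that $\Ain := \overrightarrow{T \setminus M} \cup \overleftarrow{M}$ is a spanning in-arborescence of $G_M$ rooted at a vertex $r$; by Lemma~\ref{lemma:matching in arborescence} (or directly, since $\overleftarrow{M}$ covers $V^-$ and every vertex of $V^-$ has out-degree exactly $1$ in any in-arborescence, while $r$ has out-degree $0$), the root $r$ must lie in $V^+$. Now I count the degrees of $T$ from the structure of $\Ain$. Each vertex $v \in V^- $ is covered by exactly one edge of $M$, contributing $1$ to $\deg_T(v)$; and since $v$ has in-degree in $\Ain$ equal to $\deg_{\Ain}^{\mathrm{in}}(v)$, which equals the number of edges of $\overrightarrow{T \setminus M}$ entering $v$. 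On the other hand, in an $r$-in-arborescence on $2n$ vertices there are $2n-1$ edges, of which $n$ are the matching edges $\overleftarrow{M}$, leaving $n-1$ edges of $\overrightarrow{T\setminus M}$; so $T$ itself has $(2n-1)$ edges, consistent with being a spanning tree. The key point is the out-degree bookkeeping: in $\Ain$ every non-root vertex has out-degree $1$ and $r$ has out-degree $0$. For $v \in V^-$, its unique out-edge is the matching edge $\overleftarrow{e}$, so no edge of $T\setminus M$ leaves $v$ toward $V^+$; hence the only $T$-edges at $v$ other than its matching edge are those of $T\setminus M$ entering $v$. For $v \in V^+ \setminus\{r\}$, its unique out-edge in $\Ain$ is in $\overrightarrow{T\setminus M}$ (since $\overleftarrow{M}$ has no edges leaving $V^+$), i.e. exactly one edge of $T\setminus M$ is incident to $v$ and oriented away from $v$; together with the unique matching edge at $v$ this gives $\deg_T(v) \ge 2$. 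Since $\sum_{v\in V^+}\deg_T(v) = |T| = 2n-1$ (all edges of a bipartite graph have one endpoint in $V^+$) and we have accounted for $1 + 2(n-1) = 2n-1$, equality must hold everywhere, forcing $\deg_T(r) = 1$ and $\deg_T(v) = 2$ for all other $v \in V^+$.

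For the direction $(2 \Rightarrow 1)$, suppose $r \in V^+$ satisfies $\deg_T(r) = 1$ and $\deg_T(v) = 2$ for every $v \in V^+\setminus\{r\}$. Root the tree $T$ at $r$ and orient every edge of $T$ toward $r$ (i.e. each non-root vertex points to its parent); call this orientation $\vec{T}$, so $\vec{T}$ is an $r$-in-arborescence on the underlying edge set of $T$. I claim the set $M$ of edges of $T$ that are oriented from $V^-$ to $V^+$ in $\vec{T}$ — equivalently, the parent-edges of the $V^-$-vertices — is a perfect matching of $G$ contained in $T$. Indeed, each $v \in V^-$ has exactly one parent-edge, which goes to a vertex of $V^+$, so $M$ covers $V^-$ exactly once. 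For $V^+$: a vertex $u \in V^+\setminus\{r\}$ has $\deg_T(u) = 2$, one of which is its parent-edge (going to $V^-$, oriented away from $u$) and one of which is a child-edge from some $v \in V^-$ (oriented toward $u$); that child-edge is the unique element of $M$ at $u$. The root $r$ has $\deg_T(r)=1$; that single edge goes to some $v\in V^-$ and must be the parent-edge of $v$ (as $r$ is the root and has no parent), hence lies in $M$ and is the unique $M$-edge at $r$. So $M$ is a perfect matching. Finally, rewriting $\vec{T}$ in the orientation conventions of $G_M$: the $M$-edges are oriented $V^-\to V^+$, which is exactly $\overleftarrow{M}$, and the edges of $T\setminus M$ are oriented $V^+ \to V^-$ (each such edge is the parent-edge of some $V^+$-vertex, going to $V^-$), which is exactly $\overrightarrow{T\setminus M}$. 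Hence $\vec{T} = \overrightarrow{T\setminus M}\cup\overleftarrow{M}$, and this is an $r$-in-arborescence of $G_M$, giving statement (1).

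The main obstacle is bookkeeping the orientations correctly in the $(2\Rightarrow 1)$ direction: one must verify that rooting $T$ at $r$ and orienting toward the root automatically produces an orientation that is \emph{legal} in $G_M$, i.e. that every non-matching tree edge points from $V^+$ to $V^-$ (so it lies in $\overrightarrow{E}$) and every matching tree edge points from $V^-$ to $V^+$ (so it lies in $\overleftarrow{M}$ rather than being a forbidden reverse of a non-matching edge). This is precisely where the degree hypothesis is used: because each $V^+$-vertex other than $r$ has degree exactly $2$ in $T$, its parent-edge and its (unique) child-edge play complementary roles, which pins down both the matching $M$ and the consistency of the orientation with the edge set $\overrightarrow{E}\cup\overleftarrow{M}$ of $G_M$; without this constraint the orientation toward $r$ could send a non-matching edge the wrong way. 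Once this compatibility is checked, the rest is routine counting.
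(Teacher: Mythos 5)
Your proof is correct, and the interesting direction $(2\Rightarrow 1)$ takes a genuinely different route from the paper's. The paper first establishes that $T$ contains a perfect matching by verifying Hall's condition: if some $X^+\subseteq V^+$ had $|\Gamma_T(X^+)|<|X^+|$, the degree hypothesis would force the edges of $T$ meeting $X^+$ to number at least $2|X^+|-1$ while spanning at most $2|X^+|-1$ vertices, creating a cycle in the tree. Only then does it take the (unique) perfect matching $M$ of $T$ and check out-degrees to see that $\overrightarrow{T\setminus M}\cup\overleftarrow{M}$ is an in-arborescence. You instead root $T$ at $r$, orient every edge toward its parent, and \emph{read off} $M$ as the set of parent-edges of the $V^-$-vertices; the degree hypothesis then guarantees each $V^+$-vertex has exactly one child in $V^-$ (so $M$ is a perfect matching), and the parent-edges of $V^+$-vertices are exactly $T\setminus M$ oriented from $V^+$ to $V^-$, so the orientation toward the root is automatically a spanning $r$-in-arborescence living inside $G_M$. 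This bypasses Hall's theorem entirely and makes the existence of the matching and the arborescence property fall out of a single construction, at the cost of the orientation bookkeeping you rightly flag as the delicate point. Your $(1\Rightarrow 2)$ direction is essentially the paper's degree count, phrased slightly more indirectly (inequalities plus the global identity $\sum_{v\in V^+}\deg_T(v)=2n-1$ rather than the exact pointwise identity $\deg_T(v)=\deg_{\Ain}^{\mathrm{in}}(v)+\deg_{\Ain}^{\mathrm{out}}(v)$), but it is sound; the one stray sentence about $\deg_{\Ain}^{\mathrm{in}}(v)$ for $v\in V^-$ trails off without being needed and could simply be deleted.
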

\begin{proof}
$(1 \Rightarrow 2)$~
Let $\Ain = \overrightarrow{T \setminus M} \cup \overleftarrow{M}$.
Since $M$ is a perfect matching in $G$, $\deg_{\Ain}^\mathrm{in}(v) = 1$ for every $v \in V^+$ and $\deg_{\Ain}^\mathrm{out}(v) = 1$ for every $v \in V^-$.
Also, since $\Ain$ is a spanning in-arborescence, $\deg_{\Ain}^\mathrm{out}(v) = 1$ for every $v \in V^+ \setminus \{r\}$ and $\deg_{\Ain}^\mathrm{out}(r) = 0$, where $r$ is the root of $\Ain$.
As $\deg_T(v) = \deg_{\Ain}^\mathrm{in}(v) + \deg_{\Ain}^\mathrm{out}(v)$ for every vertex $v$, we are done.

\medskip\noindent
$(2 \Rightarrow 1)$~
We first show that $T$ has a perfect matching.
By Hall's theorem~\cite{hall1935representatives}, it suffices to show that for any vertex set $X^+ \subseteq V^+$, we have $|X^+| \le |\Gamma_T(X^+)|$.
Suppose to the contrary that there exists a vertex set $X^+ \subseteq V^+$ such that $|X^+| > |\Gamma_T(X^+)|$.
Then, the subgraph formed by the edges in $T$ intersecting $X^+$ has at most $2|X^+| - 1$ vertices and at least $2|X^+| - 1$ edges (due to the degree condition), so it must contain a cycle, a contradiction.

Let $M$ be the perfect matching in $T$ (note that it is unique).
We next show that $\Ain = \overrightarrow{T \setminus M} \cup \overleftarrow{M}$ is an in-arborescence.
As $T$ is a tree, it suffices to show that $\deg_{\Ain}^\mathrm{out}(v) = 1$ for every vertex $v \in V \setminus \{r\}$, where $r \in V^+$ is the vertex specified in Statement~2.
By definition, every vertex $v \in V^-$ has a unique outgoing edge, which is in $\overleftarrow{M}$, and hence $\deg_{\Ain}^\mathrm{out}(v) = 1$.
Also, every vertex $v \in V^+$ has a unique incoming edge, which is in $\overleftarrow{M}$, and hence $\deg_{\Ain}^\mathrm{in}(v) = 1$.
Thus, for every vertex $v \in V^+ \setminus \{r\}$, we have $\deg_{\Ain}^\mathrm{out}(v) = \deg_T(v) - \deg_{\Ain}^\mathrm{in}(v) = 2 - 1 = 1$, and we are done.
\end{proof}

Let us call a spanning tree satisfying Statement~2 in Lemma~\ref{lemma:STPM iff PM and orientation} \emph{strongly balanced}, and say that the vertex $r$ is the \emph{root} and the tree is \emph{rooted at $r$}.
By Lemma~\ref{lemma: DMI includes arborescence}, every DM-irreducible balanced bipartite graph has a strongly balanced spanning tree, and then we formulate the first task as follows.

\begin{problem}[\textsc{Strongly Balanced Spanning Tree (SBST)}]
\mbox{ }\vspace{-1mm}
\begin{description}
\setlength{\itemsep}{0mm}
\item[Input:] A DM-irreducible balanced bipartite graph $G$ with $w \colon  E(G)  \to \mathbb{R}_{\ge 0}$.
\item[Goal:] Find a minimum-weight strongly balanced spanning tree of $G$.
\end{description}
\end{problem}

For the sake of simplicity of the following discussion, we consider the following problem, which is polynomial-time equivalent to the above problem and actually enough for our purpose.

\begin{problem}[\textsc{SBST with Root (SBSTR)}]
\mbox{ }\vspace{-1mm}
\begin{description}
\setlength{\itemsep}{0mm}
\item[Input:] A DM-irreducible balanced bipartite graph $G=(V^+,V^-;E)$ with $w \colon  E  \to \mathbb{R}_{\ge 0}$, and $r \in V^+$.
\item[Goal:] \mbox{Find a minimum-weight strongly balanced spanning tree of $G$ rooted at $r$.}
\end{description}
\end{problem}

Our proposed algorithm for \textsc{DMISS} is described as follows.
\begin{description}
\setlength{\itemsep}{0mm}
  \item[Step 1.]
  Choose an arbitrary vertex $r \in V^+$ as a root, and solve \textsc{SBSTR} with the input $(G, w, r)$.
  Let $T$ be the obtained solution, and $M$ be the perfect matching in $T$.
  \item[Step 2.]
  Find a minimum-weight spanning $r$-out-arborescence $\Aout$ in $(G_M, w_M)$ (recall that $w_M$ is defined as \eqref{eq:w_M}).
  \item[Step 3.]
  Output $T \cup \overline{\Aout}$.
\end{description}

The output is indeed a DM-irreducible spanning subgraph (by Lemma~\ref{lemma: DMI includes arborescence}).
To complete the proof of Theorem~\ref{thm:approx}, we analyze the computational time and the approximation ratio.

\subsubsection*{Computational Time}
Let $n = |V^+| = |V^-|$ and $m = |E|$.
As we already mentioned, the second task (finding a minimum-weight spanning arborescence) can be done in $O(m + n \log n)$ time~\cite{gabow1986efficient}.

The following observation (which was also observed in \cite{frank1998bipartite}) leads to an $O(n^3)$-time algorithm for the first task (solving \textsc{SBSTR}), which completes the analysis.
We omit the basics on matroids, for which we refer the readers to \cite{schrijver2003combinatorial}.

\begin{lemma}[cf.~{\cite[Lemma~5]{frank1998bipartite}}]\label{lemma:matroid intersection}
  \textsc{SBST} and \textsc{SBSTR} are formulated as the weighted matroid intersection problem, i.e., the set of feasible solutions can be represented as the set of common bases of two matroids.
\end{lemma}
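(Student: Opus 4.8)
The plan is to exhibit two matroids on the common ground set $E$ whose common bases are exactly the strongly balanced spanning trees of $G$ (for \textsc{SBST}), and then to handle \textsc{SBSTR} by a minor modification. Recall that by Lemma~\ref{lemma:STPM iff PM and orientation}, a spanning tree $T$ is strongly balanced iff there is a vertex $r \in V^+$ with $\deg_T(r) = 1$ and $\deg_T(v) = 2$ for all $v \in V^+ \setminus \{r\}$; equivalently, $\sum_{v \in V^+} \deg_T(v) = 2|V^+| - 1 = 2n - 1$, which (since every edge contributes exactly $1$ to this sum) just says $|T| = 2n - 1$, i.e. $T$ spans the $2n$-vertex graph — so the ``strongly balanced'' constraint is really the combination of ``spanning tree'' with ``$\deg_T(v) \le 2$ for every $v \in V^+$''. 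That is the key reformulation.

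First I would take the first matroid $M_1$ to be the graphic matroid of $G$, whose bases are the spanning trees of $G$ (here I use that $G$ is connected, which follows from DM-irreducibility via Lemma~\ref{lemma: DMI includes arborescence}). For the second matroid $M_2$ I would take the partition matroid on $E$ induced by the incidences at the vertices of $V^+$: partition $E$ into the classes $E_v := \{\, e \in E : v \in e \,\}$ for $v \in V^+$ (this is genuinely a partition of $E$ because $G$ is bipartite, so each edge has exactly one endpoint in $V^+$), and let an edge set be independent in $M_2$ iff it contains at most $2$ edges from each class $E_v$. Then I would argue that a common independent set of size $2n - 1$ is simultaneously a spanning tree and satisfies $\deg_T(v) \le 2$ for all $v \in V^+$; conversely any strongly balanced spanning tree has $2n - 1$ edges and meets both independence conditions, hence is a common base. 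A small point to check is that the common bases all have the same size $2n - 1$ — this is immediate since $M_1$'s bases already all have size $2n-1$, and one should note a strongly balanced spanning tree is in particular $M_2$-independent so the maximum common independent set size is exactly $2n-1$; equivalently, invoke that every DM-irreducible $G$ has such a tree (Lemma~\ref{lemma: DMI includes arborescence}).

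For \textsc{SBSTR} with a prescribed root $r \in V^+$, I would keep $M_1$ graphic and only tighten the partition matroid: use the same partition $\{E_v : v \in V^+\}$ but set the capacity of class $E_r$ to $1$ and of every other class to $2$. A common base of size $2n-1$ is then a spanning tree with $\deg_T(r) \le 1$ and $\deg_T(v) \le 2$ for $v \neq r$; since the degrees over $V^+$ sum to $2n-1$ this forces $\deg_T(r) = 1$ and $\deg_T(v) = 2$ for all other $v \in V^+$, i.e. exactly strongly balanced rooted at $r$ (again nonemptiness of the common-base family follows from Lemma~\ref{lemma: DMI includes arborescence} applied at $r$). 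Assigning the weight $w$ to the ground set makes minimum-weight strongly balanced (rooted) spanning trees precisely minimum-weight common bases, so both problems are weighted matroid intersection instances, proving the lemma. I do not expect a serious obstacle here — the only thing requiring a moment's care is the degree-sum accounting that converts the per-vertex capacity bounds plus the spanning-tree edge count into the exact equalities $\deg_T(r)=1$, $\deg_T(v)=2$, and confirming the graphic matroid's bases coincide with spanning trees, which needs $G$ connected.
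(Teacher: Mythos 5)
Your construction is essentially the paper's: the cycle (graphic) matroid of $G$ intersected with the partition matroid induced by the stable set $V^+$ with capacities $2$ (and capacity $1$ at $r$ for \textsc{SBSTR}), together with the same degree-sum accounting that turns the capacity bounds plus $|T|=2n-1$ into the exact degree profile of Lemma~\ref{lemma:STPM iff PM and orientation}. The one small point where you deviate: for \textsc{SBST} you leave the partition matroid untruncated, so its rank is generically $2n$ and a strongly balanced spanning tree is a maximum common \emph{independent set} but not literally a common \emph{base}; the paper fixes this by truncating $\mathbf{M}_2$ to rank $2|V^+|-1$, which is what the lemma's ``common bases'' phrasing requires (your \textsc{SBSTR} matroid already has rank $2n-1$, so no truncation is needed there, and that is the version the algorithm actually uses).
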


\begin{proof}
We first introduce two matroids.
Let $\mathbf{M}_1 = (E, \mathcal{I}_1)$ be the cycle matroid of $G$, i.e.,
\[\mathcal{I}_1 = \{ F \subseteq E \mid \text{$F$ forms a forest (contains no cycle)} \},\]
and $\mathbf{M}_2 = (E, \mathcal{I}_2)$ be a partition matroid defined by
\[\mathcal{I}_2 = \{ F \subseteq E \mid \deg_F(v) \le 2 \ (\forall v \in V^+) \}.\]
Note that $V^+$ is a stable set in $G$ by definition.

In \textsc{SBST}, the set of feasible solutions is represented as the set of common bases of $\mathbf{M}_1$ and the truncation of $\mathbf{M}_2$ by $2|V^+| - 1$, whose independent set family is $\{ F \in \mathcal{I}_2 \mid |F| \le 2|V^+| - 1 \}$.
In \textsc{SBSTR}, the set of feasible solutions is represented as the set of common bases of $\mathbf{M}_1$ and a slightly modified partition matroid $\mathbf{M}'_2 = (E, \mathcal{I}'_2)$ such that
\[\mathcal{I}'_2 = \{ F \subseteq E \mid \deg_F(r) \le 1 \text{ and } \deg_F(v) \le 2 \ (\forall v \in V^+ \setminus \{r\}) \}. \qedhere\]
\end{proof}

Lemma~\ref{lemma:matroid intersection} implies that \textsc{SBST} and \text{SBSTR} can be solved in polynomial time\footnote{In contrast, it was recently shown that finding a strongly balanced spanning tree in a nonbipartite graph is NP-hard \cite{berczi2024finding}.} with the aid of matroid intersection algorithms.
In particular, Brezovec, Cornu{\'e}jols, and Glover~\cite{brezovec1988matroid} proposed an $O(\tilde{n}\tilde{m} + \tilde{k}\tilde{n}^2)$-time algorithm for the following special case:
one matroid is the cycle matroid of a graph and the other is a partition matroid whose partition classes are induced by a stable set of the graph (i.e., each partition class is the set of edges incident to a vertex in the stable set), where $\tilde{n}$ and $\tilde{m}$ are the numbers of vertices and of edges in the graph, respectively, and $\tilde{k}$ is the size of the stable set inducing the partition classes.
By the proof of Lemma~\ref{lemma:matroid intersection}, \textsc{SBSTR} completely fits this situation, and hence we can exactly solve it in $O(n^3)$ time, where we have $\tilde{k} = n = \frac{\tilde{n}}{2}$ and $\tilde{m} \le n^2$.

\begin{remark}
It seems not easy to essentially improve the computational time.
Both \textsc{SBST} and \textsc{SBSTR} actually include two fundamental special cases of the weighted matroid intersection problem: finding a minimum-weight perfect matching in a bipartite graph, and finding a minimum-weight spanning arborescence in a digraph.
In particular, for the former problem, one of the best-known bounds in general is $O(n(m + n \log n))$~\cite{edmonds1972theoretical}, which is not better than $O(n^3)$ when the graph is dense (i.e., $m = \Theta(n^2)$).

We sketch that \textsc{SBSTR} indeed includes the two problems.
For a balanced bipartite graph $G = (V^+, V^-; E)$, add two vertices $s^+, s^-$ and $|V^+| + 1$ edges $\{v, s^-\}$ $(v \in V^+ \cup \{s^+\})$ of weight $0$, and add a sufficiently large value (uniformly) to the weights of all the edges in $E$.
Then, a minimum-weight perfect matching in $G$ corresponds to a minimum-weight strongly balanced spanning tree of the resulting graph rooted at $s^+$, which must use all the additional edges of weight $0$.

Also, for a digraph $G = (V, E)$, split each vertex $v \in V$ into two vertices $v^+$ and $v^-$ with an edge $\{v^+, v^-\}$ of weight $0$, replace each edge $(u, v)$ with $\{u^+, v^-\}$ of the same weight plus a sufficiently large value (the same for all the edges).
Then, the resulting graph $G'$ is indeed a balanced bipartite graph, and a minimum-weight spanning in-arborescence of $G$ rooted at $r \in V$ corresponds to a minimum-weight strongly balanced spanning tree of $G'$ rooted at $r^+$, which must use all the edges $\{v^+, v^-\}$ of weight $0$.
\end{remark}

\subsubsection*{Approximation Ratio}
Let $\Sopt$ be an optimal solution.
We show $w(T \cup \overline{\Aout}) = w(T) + w_M(\Aout) \le 2w(\Sopt)$ (recall that $w_M \colon E(G_M) \to \mathbb{R}_{\ge 0}$ is defined as \eqref{eq:w_M}).
The following two claims imply this.

\begin{claim}\label{claim:spnning and opt}
  $w(T) \le w(\Sopt)$.
\end{claim}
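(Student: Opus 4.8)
The plan is to show that the optimal solution $\Sopt$ to \textsc{DMISS}, restricted appropriately, contains a feasible solution to \textsc{SBSTR} on the input $(G, w, r)$, so that the minimum-weight strongly balanced spanning tree $T$ rooted at $r$ computed in Step~1 has weight no larger than $w(\Sopt)$. First I would invoke Lemma~\ref{lemma: DMI includes arborescence}: since $\Sopt$ is DM-irreducible, it has a perfect matching $\Mopt$, and for this $\Mopt$ (and in particular for the root $r \in V^+$ chosen in Step~1) the auxiliary graph $(\Sopt)_{\Mopt}$ has a spanning $r$-in-arborescence, call it $\Ain^{\mathrm{opt}}$. By Lemma~\ref{lemma:matching in arborescence}, $\Ain^{\mathrm{opt}}$ contains all of $\overleftarrow{\Mopt}$, so it has the form $\overrightarrow{F} \cup \overleftarrow{\Mopt}$ for some $F \subseteq \Sopt \setminus \Mopt$, and the underlying undirected edge set $T^{\mathrm{opt}} \coloneqq \overline{\Ain^{\mathrm{opt}}} = F \cup \Mopt$ is a subset of $\Sopt$.

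Next I would argue that $T^{\mathrm{opt}}$ is a strongly balanced spanning tree of $G$ rooted at $r$. Since $\Ain^{\mathrm{opt}}$ is a spanning in-arborescence on $2n$ vertices, it has exactly $2n-1$ edges and its underlying graph is connected and acyclic (an in-arborescence underlies a tree), hence $T^{\mathrm{opt}}$ is a spanning tree of $G$. For the degree condition, the perfect matching $\Mopt \subseteq T^{\mathrm{opt}}$ gives $\deg^{\mathrm{in}}_{\Ain^{\mathrm{opt}}}(v)=1$ for $v \in V^+$ and $\deg^{\mathrm{out}}_{\Ain^{\mathrm{opt}}}(v)=1$ for $v \in V^-$ (via $\overleftarrow{\Mopt}$), and the in-arborescence property gives $\deg^{\mathrm{out}}_{\Ain^{\mathrm{opt}}}(v)=1$ for $v\in V\setminus\{r\}$ and $\deg^{\mathrm{out}}_{\Ain^{\mathrm{opt}}}(r)=0$; combining these, $\deg_{T^{\mathrm{opt}}}(r)=\deg^{\mathrm{in}}_{\Ain^{\mathrm{opt}}}(r)=1$ and $\deg_{T^{\mathrm{opt}}}(v)=\deg^{\mathrm{in}}_{\Ain^{\mathrm{opt}}}(v)+\deg^{\mathrm{out}}_{\Ain^{\mathrm{opt}}}(v)=2$ for $v\in V^+\setminus\{r\}$. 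So $T^{\mathrm{opt}}$ satisfies Statement~2 of Lemma~\ref{lemma:STPM iff PM and orientation}, i.e., it is a strongly balanced spanning tree rooted at $r$, and thus a feasible solution to \textsc{SBSTR} on $(G,w,r)$. Alternatively, once $T^{\mathrm{opt}}$ is seen to be a spanning tree having the perfect matching $\Mopt$ with $\overrightarrow{T^{\mathrm{opt}}\setminus \Mopt}\cup\overleftarrow{\Mopt}=\Ain^{\mathrm{opt}}$ a spanning $r$-in-arborescence, Lemma~\ref{lemma:STPM iff PM and orientation} $(1\Rightarrow 2)$ directly yields strong balancedness.

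Finally, since $T$ is a \emph{minimum-weight} strongly balanced spanning tree rooted at $r$ and $T^{\mathrm{opt}}$ is one feasible such tree with $T^{\mathrm{opt}} \subseteq \Sopt$ and $w$ nonnegative, I conclude
\[
w(T) \le w(T^{\mathrm{opt}}) = w(F \cup \Mopt) \le w(\Sopt),
\]
which is the claim. I expect the main obstacle to be the bookkeeping of the degree identities showing $T^{\mathrm{opt}}$ is strongly balanced — in particular being careful that $r\in V^+$ so that the in-arborescence root lies on the correct side (which is exactly where Lemma~\ref{lemma:matching in arborescence} requiring $r\in V^+$ is used) — but this is essentially a repackaging of the $(1\Rightarrow 2)$ direction of Lemma~\ref{lemma:STPM iff PM and orientation} applied to $\Sopt$ rather than a genuinely new difficulty.
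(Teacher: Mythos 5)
Your proof is correct and follows essentially the same route as the paper's: invoke Lemma~\ref{lemma: DMI includes arborescence} and Lemma~\ref{lemma:matching in arborescence} to extract from $\Sopt$ a spanning $r$-in-arborescence whose underlying edge set is a feasible solution to \textsc{SBSTR} contained in $\Sopt$, and conclude by optimality of $T$ and nonnegativity of $w$. The only difference is that you spell out the degree-counting verification of feasibility (via Lemma~\ref{lemma:STPM iff PM and orientation}), which the paper leaves implicit.
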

\begin{proof}
By Lemmas~\ref{lemma: DMI includes arborescence} and \ref{lemma:matching in arborescence}, $\Sopt$ has a perfect matching $\Mopt$, and there exists a spanning $r$-in-arborescence $\Ain'$ of $G_{\Mopt}$ such that $\overleftarrow{\Mopt} \subseteq \Ain' \subseteq \overrightarrow{\Sopt} \cup \overleftarrow{\Mopt}$.
Since $T$ is optimal and $\overline{\Ain'}$ is feasible in \textsc{SBSTR} with the input $(G, w, r)$, we have $w(T) \le w(\overline{\Ain'}) \le w(\Sopt)$.
\end{proof}

\begin{claim}\label{claim: out-arorescence and opt}
  $w_M(\Aout) \le  w(\Sopt)$.
\end{claim}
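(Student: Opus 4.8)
The plan is to bound $w_M(\Aout)$ by exhibiting a spanning $r$-out-arborescence of $G_M$ whose $w_M$-weight is at most $w(\Sopt)$; since $\Aout$ is a minimum-weight such arborescence, this suffices. The natural candidate cannot live in $G_{\Mopt}$, because $M$ (the matching chosen in Step~1) need not equal $\Mopt$, and $w_M$ is defined relative to $M$. So first I would work inside the graph $G' = (V^+, V^-; T \cup \Sopt)$, which is DM-irreducible (it contains the DM-irreducible $\Sopt$), and observe that $M$ is a perfect matching of $G'$. By Lemma~\ref{lemma: DMI includes arborescence} applied to $G'$ and the matching $M$, the auxiliary graph $G'_M = (V^+, V^-; \overrightarrow{T \cup \Sopt} \cup \overleftarrow{M})$ has a spanning $r$-out-arborescence; call it $\Aout'$. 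Since $G'_M$ is a subgraph of $G_M$, this $\Aout'$ is a feasible candidate, so $w_M(\Aout) \le w_M(\Aout')$.

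Next I would bound $w_M(\Aout')$. By the definition~\eqref{eq:w_M} of $w_M$, edges of $\Aout'$ whose underlying edge lies in $M$ contribute $0$, so $w_M(\Aout') = w\big((\overline{\Aout'}) \setminus M\big)$. The edges of $\Aout'$ are drawn from $\overrightarrow{T \cup \Sopt} \cup \overleftarrow{M}$; the ones not of $w_M$-cost $0$ have underlying edges in $(T \cup \Sopt) \setminus M$. The key point is that these non-matching edges of $\Aout'$ can in fact be taken entirely within $\overrightarrow{\Sopt}$: in the out-arborescence $\Aout'$ each vertex of $V^-$ has in-degree $1$ and each vertex $v \in V^+$ has out-degree exactly $1$ except the root $r$ which has out-degree $0$; the incoming edge at each $v \in V^-$ must come from $\overrightarrow{E}$ (it cannot be a $\overleftarrow{M}$ edge), and the outgoing edge at each $v \in V^+ \setminus \{r\}$ can be the matching edge $\overleftarrow{e}$ with $\overline{e} \in M$. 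Concretely: start from the spanning $r$-out-arborescence $\Ain^{\mathrm{rev}}$ of $G'_M$ guaranteed above, and prune it so that every vertex of $V^+\setminus\{r\}$ uses its (unique) $\overleftarrow{M}$-edge as its outgoing edge; what remains of cost is exactly one $\overrightarrow{E}$-edge entering each $v \in V^-$, and all of these can be chosen from $\overrightarrow{\Sopt}$ because $\Sopt$ is DM-irreducible and hence (via Lemma~\ref{lemma: DMI includes arborescence} and the in-degree-$1$ condition in $G_{\Mopt}$) every vertex of $V^-$ has an in-neighbor in $\Sopt$. This yields $w_M(\Aout') \le w(\Sopt)$, hence $w_M(\Aout) \le w(\Sopt)$.

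The step I expect to be the main obstacle is the second one: making precise that the non-matching, positive-weight edges of the chosen out-arborescence of $G'_M$ can be confined to $\overrightarrow{\Sopt}$, rather than merely to $\overrightarrow{T \cup \Sopt}$. The cleanest way is probably to avoid the pruning argument and instead directly construct the arborescence: fix the perfect matching $\Mopt \subseteq \Sopt$, take a spanning $r$-out-arborescence $B$ of $G_{\Mopt}$ (exists by Lemma~\ref{lemma: DMI includes arborescence}), and note $B \subseteq \overrightarrow{\Sopt} \cup \overleftarrow{\Mopt}$ with $w_{\Mopt}(B) \le w(\Sopt)$ by nonnegativity of $w$ and $E(\Sopt)\setminus \Mopt \subseteq \Sopt$. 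Then "re-matching'' from $\Mopt$ to $M$: because $G_M$ also has $\overleftarrow{M}$ available and the $\overrightarrow{E}$-part is shared, one builds from $B$ a spanning $r$-out-arborescence $\Aout'$ of $G_M$ using only the $\overrightarrow{\Sopt}$-edges of $B$ together with $\overleftarrow{M}$-edges — the $\overrightarrow{\Sopt}$-edges of $B$ already give, for each $v \in V^-$, an incoming $\overrightarrow{E}$-edge, and then each $v \in V^+\setminus\{r\}$ routes out along its $\overleftarrow{M}$-edge. One must check this is acyclic and connected (it is: the $\overrightarrow{\Sopt}$-edges of $B$ with the $\overleftarrow{\Mopt}$-edges removed form an $r$-out-forest on $V^+$ whose leaves reach into $V^-$, and reattaching via $\overleftarrow{M}$ merges each $V^-$ vertex back toward $r$ — a standard exchange-type verification). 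Granting that, $w_M(\Aout) \le w_M(\Aout') = w\big(\overline{\Aout'}\setminus M\big) \le w(\Sopt)$, completing the claim.
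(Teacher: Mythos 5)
Your overall strategy (exhibit a feasible spanning $r$-out-arborescence of $G_M$ of $w_M$-weight at most $w(\Sopt)$) is the right one, but your execution has a genuine gap, and it stems from your choice of auxiliary graph. You work inside $G' = (V^+, V^-; T \cup \Sopt)$, so the out-arborescence you obtain may use positive-weight edges of $T \setminus (\Sopt \cup M)$, and you then need the delicate step of confining all positive-weight edges to $\Sopt$. Neither of your two attempts at that step is sound. The ``pruning'' version mixes up the degree conditions (in an $r$-out-arborescence every non-root vertex has \emph{in}-degree $1$; out-degrees are unconstrained, so ``each $v \in V^+$ has out-degree exactly $1$'' is the in-arborescence condition) and, more importantly, replacing the incoming edge of each $v \in V^-$ by an arbitrary $\overrightarrow{\Sopt}$-edge just because an in-neighbor in $\Sopt$ exists can create directed cycles and destroy the arborescence. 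The ``re-matching'' version has the same flaw: the edge set consisting of the $n$ edges of $B$ entering $V^-$ together with $\overleftarrow{M}$ minus the edge entering $r$ gives every non-root vertex in-degree $1$, but it need not be acyclic --- for instance, if $\{u^+, v^-\} \in (\Sopt \cap M) \setminus \Mopt$ and $B$ enters $v^-$ via $(u^+, v^-)$, you get the $2$-cycle $u^+ \to v^- \to u^+$. Calling this ``a standard exchange-type verification'' does not close the hole; some genuine argument is needed.

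The fix is a one-line change that eliminates the obstacle entirely, and it is what the paper does: take $G' = (V^+, V^-; \Sopt \cup M)$ rather than $T \cup \Sopt$. Then $M$ is still a perfect matching of $G'$, $G'$ is DM-irreducible (it contains $\Sopt$), so by Lemma~\ref{lemma: DMI includes arborescence} the auxiliary graph $G'_M$ has a spanning $r$-out-arborescence $\Aout'$; since $G'_M$ is a subgraph of $G_M$, $w_M(\Aout) \le w_M(\Aout') = w(\overline{\Aout'} \setminus M)$, and now $\overline{\Aout'} \setminus M \subseteq (\Sopt \cup M) \setminus M \subseteq \Sopt$ holds automatically, giving $w(\overline{\Aout'} \setminus M) \le w(\Sopt)$ by nonnegativity. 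No pruning or re-matching is needed.
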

\begin{proof}
As with the proof of Lemma~\ref{lemma: 3-approx}, consider a subgraph $G' = (V^+, V^-; \Sopt \cup M)$ of $G$.
Then, by Lemma~\ref{lemma: DMI includes arborescence}, $G'_M$ has a spanning $r$-out-arborescence $\Aout'$, and hence
\[w_M(\Aout) \le w_M(\Aout') = w(\overline{\Aout'} \setminus M) \le w(\Sopt).\qedhere\]
\end{proof}

\begin{figure}[tb]
  \centering
  \includegraphics[width=0.8\linewidth]{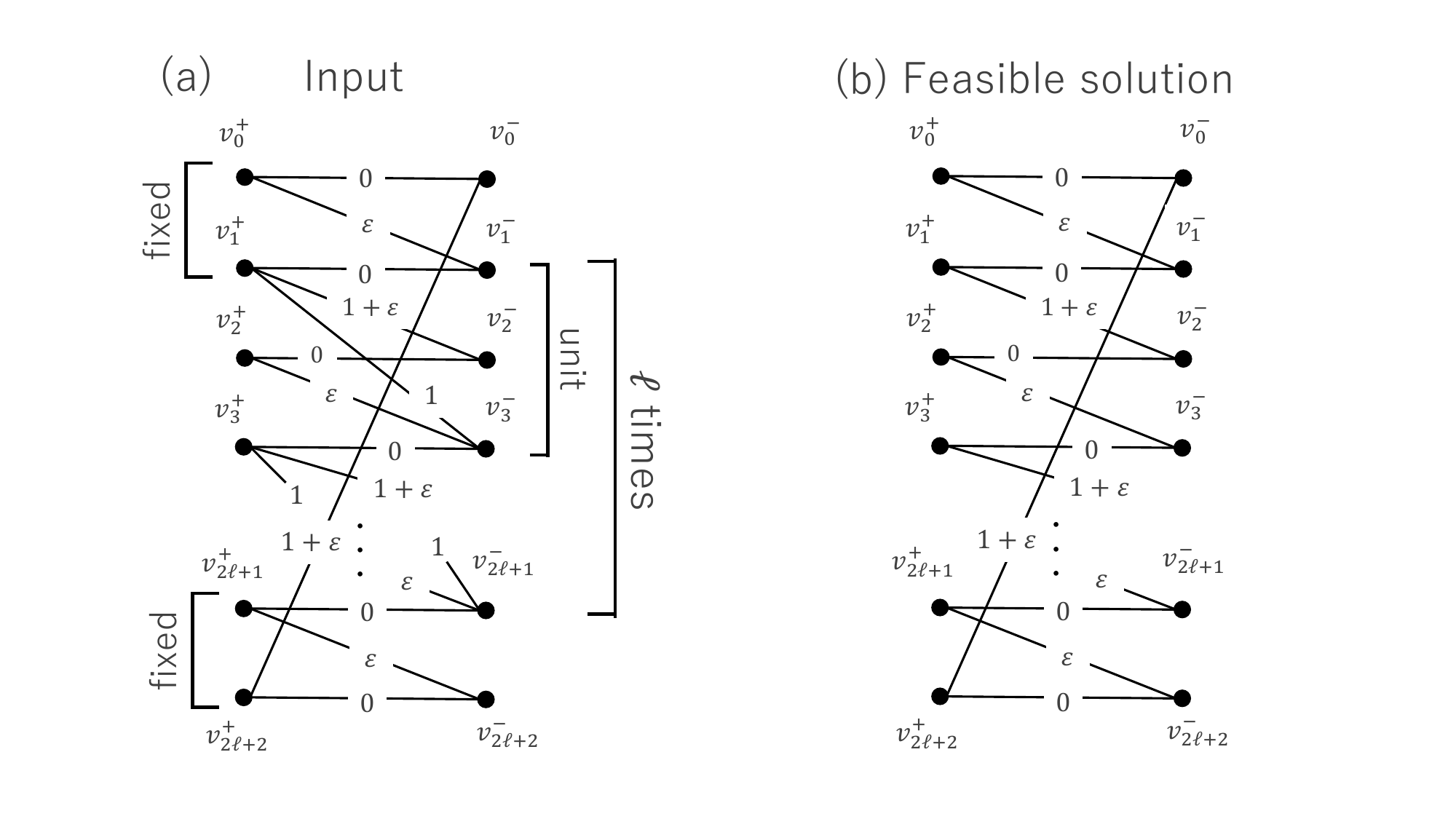}
  \caption{A tight example of our $2$-approximation algorithm for DMISS.}
  \label{fig1}
\end{figure}

\begin{remark}
The above analysis is tight.
Consider the input $(G, w)$ shown in Figure~\ref{fig1}, which consists of
\begin{itemize}
\setlength{\itemsep}{0mm}
\item $2(2\ell + 3)$ vertices $v_i^+, v_i^-$ $(i = 0, 1, \dots, 2\ell + 2)$,
\item $2\ell + 3$ edges $\{v_i^+, v_i^-\}$ $(i = 0, 1, \dots, 2\ell + 2)$ of weight $0$,
\item $\ell$ edges $\{v_{2i-1}^+, v_{2i}^-\}$ $(i = 1, 2, \dots, \ell)$ of weight $1 + \varepsilon$,
\item $\ell$ edges $\{v_{2i}^+, v_{2i+1}^-\}$ $(i = 1, 2, \dots, \ell)$ of weight $\varepsilon$,
\item $\ell$ edges $\{v_{2i-1}^+, v_{2i+1}^-\}$ $(i = 1, 2, \dots, \ell)$ of weight $1$,
\item two edges $\{v_0^+, v_1^-\}$ and $\{v_{2\ell+1}^+, v_{2\ell+2}^-\}$ of weight $\varepsilon$, and
\item an edge $\{v_{2\ell+2}^+, v_0^-\}$ of weight $1 + \varepsilon$,
\end{itemize}
where $\ell$ is a positive integer and $\varepsilon \coloneqq \frac{1}{\ell} > 0$.

While a Hamiltonian cycle $(v_0^-, v_0^+, v_1^-, v_1^+, \dots, v_{2\ell + 2}^-, v_{2\ell + 2}^+, v_0^-)$ illustrated as in (b) is a feasible solution, the output of our algorithm is the whole graph as follows.
First, the edges of weight at most $1$ form a spanning tree of $G$, which is of minimum weight (consider Kruskal's algorithm~\cite{kruskal1956shortest}).
This minimum-weight spanning tree is strongly balanced with root $r = v_{2\ell + 2}^+$, and hence it is eligible for $T$ in Step~1.
The edges of weight $0$ form a perfect matching $M$, and any spanning $r$-out-arborescence of $G_M$ (any candidate of $\Aout$ in Step~2) must contain all the remaining edges, which are of weight $1 + \varepsilon$ (as they are the only edges entering $v_{2i}$ $(i = 1, 2, \dots, \ell)$ in $G_M$).

Thus, the approximation ratio of our algorithm is at least
\[\frac{w(T \cup \overline{\Aout})}{w(\Sopt)} \ge \frac{\ell(2 + 2\varepsilon) + 2\varepsilon + 1 + \varepsilon}{\ell(1 + 2\varepsilon) + 2\varepsilon + 1 + \varepsilon} = \frac{2\ell^2 + 3\ell + 3}{\ell^2 + 3\ell + 3} \to 2 \quad (\ell \to \infty).\]
\end{remark}

\section{FPT Algorithm for Unweighted DMISS}\label{section:4}
In this section, we prove Theorem~\ref{thm:FPT} by presenting an FPT algorithm for \textsc{UnweightedDMISS}.

In Section~\ref{section:4.1}, as a base of our result, we sketch the idea of an FPT algorithm for \textsc{UnweightedSCSS} proposed by Bang-Jensen and Yeo~\cite{bang2008minimum}.
In Section~\ref{section:4.2}, by extending it, we propose an FPT algorithm for \textsc{UnweightedDMISS}.

\subsection{FPT Algorithm for Unweighted SCSS}\label{section:4.1}
The FPT algorithm utilizes another well-known characterization of strongly connected digraphs.
For a digraph $G = (V, E)$, an \emph{ear decomposition} of $G$ is a sequence $\mathcal{E} = (P_0, P_1, \dots, P_f)$ of edge-disjoint paths and cycles in $G$ with $E = \bigcup_{i=0}^f E(P_i)$ such that
\begin{itemize}
\setlength{\itemsep}{0mm}
\item $P_0$ is a cycle, and
\item $P_i$ $(i = 1, 2, \dots, f)$ is either a path between two different vertices in $\bigcup_{j=0}^{i-1} V(P_j)$ whose inner vertices are in $V \setminus \bigcup_{j=0}^{i-1} V(P_j)$, or a cycle intersecting exactly one vertex in $\bigcup_{j=0}^{i-1} V(P_j)$.
\end{itemize}
Each path or cycle $P_i$ is called an \emph{ear}.
An ear is called \emph{long} if it is of length at least $3$, where the length is evaluated by the number of edges.

\begin{lemma}[cf.~{\cite[Theorem~2.2.1]{frank2011connections}}] \label{lemma:SC no ear}
  A digraph is strongly connected if and only if it has an ear decomposition. 
\end{lemma}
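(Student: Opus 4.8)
The plan is to prove the two implications separately. The ``if'' direction (an ear decomposition forces strong connectivity) is a short induction, while the ``only if'' direction requires actually exhibiting an ear decomposition, which I would do by a greedy procedure that repeatedly glues on one more ear.

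For the ``if'' direction, I would prove by induction on $i$ that the partial union $G_i \coloneqq \bigcup_{j=0}^{i} P_j$ is strongly connected. The base case $G_0 = P_0$ is a cycle, hence strongly connected. For the step, assume $G_i$ is strongly connected and consider $G_{i+1} = G_i \cup P_{i+1}$. If $P_{i+1}$ is a path from $a$ to $b$ with $a,b \in V(G_i)$ and inner vertices outside $V(G_i)$, then every vertex $v$ on $P_{i+1}$ reaches $b \in V(G_i)$ by following $P_{i+1}$ forward and is reached from $a \in V(G_i)$ the same way; combining this with strong connectivity of $G_i$ shows that all pairs in $G_{i+1}$ are mutually reachable. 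The case where $P_{i+1}$ is a cycle meeting $V(G_i)$ in a single vertex is analogous. Taking $i = f$ gives that $G$ is strongly connected (this uses that the ears cover all of $V(G)$, which is implicit in the decomposition).

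For the ``only if'' direction, assume $G$ is strongly connected; I may assume $G$ has an edge, else $G$ is a single vertex and the claim is trivial by convention. Since any two vertices lie on a common closed walk, $G$ contains a cycle, which we take as $P_0$. Suppose inductively that $P_0,\dots,P_i$ are edge-disjoint valid ears with union $G_i \ne G$; I claim one more ear can be appended. If $V(G_i) \ne V(G)$, then by strong connectivity some edge $(x,w)$ leaves $V(G_i)$ (so $x \in V(G_i)$, $w \notin V(G_i)$), and there is a directed path from $w$ into $V(G_i)$; take a shortest one, $Q$, ending at $y \in V(G_i)$. Being shortest, $Q$ is simple and meets $V(G_i)$ only at $y$, so $P_{i+1} \coloneqq (x,w) + Q$ is a valid ear --- a simple $x$--$y$ path with inner vertices outside $V(G_i)$ if $x \ne y$, and a simple cycle through the single vertex $x$ of $V(G_i)$ if $x = y$ --- and it is edge-disjoint from $G_i$ since each of its edges has an endpoint outside $V(G_i)$, hence outside $E(G_i)$. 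If instead $V(G_i) = V(G)$ but $E(G_i) \ne E(G)$, then any edge $e \in E(G) \setminus E(G_i)$ is itself a valid ear (a length-$1$ path, or a loop regarded as a one-vertex cycle) edge-disjoint from $G_i$. Each step strictly enlarges $G_i$, so the process terminates with $G_i = G$, producing the desired ear decomposition.

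The construction is classical, so I do not expect a serious obstacle; the only points needing care are verifying in the first case that $(x,w)+Q$ is genuinely a \emph{simple} path or cycle and is edge-disjoint from everything built so far (both secured by choosing a shortest return path $Q$), and cleanly disposing of the small degenerate cases --- the return endpoint $y$ coinciding with $x$, loops when $V(G_i)=V(G)$, and the one-vertex input graph.
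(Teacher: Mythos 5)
Your proof is correct; note that the paper itself gives no proof of this lemma and simply cites Frank's book, and your argument (induction on ears for sufficiency, greedy ear-gluing via a shortest return path for necessity) is exactly the classical proof found there. The only delicate points --- simplicity and edge-disjointness of the new ear, the case $x=y$, and the one-vertex degenerate case --- are all handled adequately in your write-up.
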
 

The idea of the FPT algorithm for \textsc{UnweightedSCSS} is sketched as follows.
It first computes an ear decomposition $\mathcal{E} = (P_0, P_1, \dots, P_f)$ of the input graph $G$ so that the set of long ears is inclusion-wise maximal (i.e., the set of long ears in $\mathcal{E}$ is not a proper subset of that in any ear decomposition $\mathcal{E}'$ of $G$, but the number of long ears in $\mathcal{E}$ is not necessarily maximized).
If the number of long ears is large enough (depending on the parameter $k$), then the ear decomposition $\mathcal{E}$ gives a strongly connected spanning subgraph that is small enough by removing all the ears of length $1$, and the answer is yes.
Otherwise, let $X$ be the set of vertices intersecting a long ear in $\mathcal{E}$.
Under the assumption that $\mathcal{E}$ does not immediately give a solution, we can show that $|X| = O(k)$.
Also, by the inclusion-wise maximality of long ears in $\mathcal{E}$, the vertices in $V \setminus X$ are added by ears of length $2$, and this is basically true for some minimum strongly connected spanning subgraph.
Thus, we can test the existence of a solution by an exhaustive search on the subgraphs of $G$ on the vertex set $X$ (plus a small subset), where we have only $2^{O(k^2)}$ candidates.

\subsection{FPT Algorithm for Unweighted DMISS}\label{section:4.2}
First, we give an alternative characterization of DM-irreducible balanced bipartite graphs by ear decompositions.
An ear decomposition of a bipartite graph $G = (V^+, V^-; E)$ is defined in the same way, and we call it \emph{odd proper} if
\begin{itemize}
\setlength{\itemsep}{0mm}
\item $P_0$ is a cycle of length at least $4$, and
\item each $P_i$ $(i = 1, 2, \dots, f)$ is a path of odd length, which implies that one end vertex of $P_i$ is in $V^+$ and the other is in $V^-$.
\end{itemize}

\begin{lemma}[cf.~{\cite[Theorem~2.4.17]{frank2011connections} or \cite[Theorem~4.1.6]{lovasz1986matching}}] \label{lemma:DMI no ear}
  A bipartite graph on at least four vertices is balanced and DM-irreducible if and only if it has an odd proper ear decomposition.
\end{lemma}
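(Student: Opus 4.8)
The plan is to prove Lemma~\ref{lemma:DMI no ear} by reducing it to the known characterization of strongly connected digraphs via ear decompositions (Lemma~\ref{lemma:SC no ear}), using the auxiliary-graph bridge from Lemma~\ref{lemma: DM-irreducible}. The core idea: if $G$ is balanced with a perfect matching $M$, then ear decompositions of the auxiliary digraph $G_M$ correspond, after contracting the matching edges back, to odd proper ear decompositions of $G$, and vice versa.

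\medskip\noindent\textbf{($\Leftarrow$)} Suppose $G = (V^+, V^-; E)$ has an odd proper ear decomposition $\mathcal{E} = (P_0, P_1, \dots, P_f)$. First I would observe that the vertex count is balanced: the initial even cycle $P_0$ contributes equally to $V^+$ and $V^-$, and each odd-length path ear has its two endpoints already present and adds equally many internal vertices to each side (an odd path on $V^+,V^-$ with both endpoints old adds the same number of new $V^+$ and $V^-$ vertices). Hence $|V^+| = |V^-|$. For DM-irreducibility I would produce a perfect matching $M$ and show $G_M$ is strongly connected (Lemma~\ref{lemma: DM-irreducible}, $2 \Rightarrow 1$). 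Build $M$ inductively along the ears: take alternate edges of the even cycle $P_0$; for each subsequent odd path ear, its internal edges alternate and, since the path has odd length with endpoints already matched, the internal vertices can be matched among themselves by taking alternate edges. Then orient each edge in $\overrightarrow{E}$ and reverse those of $M$ to form $G_M$; I would check directly that contracting each matching edge (equivalently, each such ear in $G_M$ produces a directed ear in the contracted digraph) yields an ear decomposition of $G_M$ after identifying matched pairs — more precisely, I would show $G_M$ itself admits an ear decomposition, since $P_0$ becomes a directed cycle and each odd path $P_i$, whose first and last edges get consistently oriented once $M$ is reversed, becomes a directed path/cycle between old vertices. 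By Lemma~\ref{lemma:SC no ear}, $G_M$ is strongly connected, so $G$ is DM-irreducible.

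\medskip\noindent\textbf{($\Rightarrow$)} Conversely, suppose $G$ is balanced and DM-irreducible with at least four vertices. By Lemma~\ref{lemma: DM-irreducible} there is a perfect matching $M$ with $G_M$ strongly connected, so by Lemma~\ref{lemma:SC no ear}, $G_M$ has an ear decomposition $\mathcal{D} = (Q_0, Q_1, \dots, Q_g)$. I would translate each directed ear $Q_i$ back to an undirected walk in $G$: an edge $\overrightarrow{e}$ maps to $e$, an edge $\overleftarrow{m}$ with $m \in M$ also maps to $m$. The key structural fact is that in $G_M$ every vertex of $V^-$ has exactly one outgoing edge (the reversed matching edge) and every vertex of $V^+$ has exactly one incoming edge (also a reversed matching edge); this rigidity forces each directed path in any ear to alternate between ``forward'' $\overrightarrow{E}$-steps and ``backward'' $\overleftarrow{M}$-steps, so the corresponding undirected path in $G$ is an alternating path and in particular has odd length when its two endpoints lie on opposite sides — which they must, since a vertex enters an ear as an endpoint only through its unique $M$-edge on one side. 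I would then argue $Q_0$ corresponds to an even cycle of length $\ge 4$ (length $\ge 4$ because $G$ is bipartite with $\ge 4$ vertices and strongly connected, ruling out a $2$-cycle image) and each later $Q_i$ to an odd path ear. Care is needed when a later $Q_i$ is a directed \emph{cycle} through one old vertex rather than a path; I would handle this by noting such a cycle, traversed from the old vertex, still projects to an odd \emph{closed} alternating walk, which I can re-describe as a path ear by splitting at the old vertex (or by choosing the ear decomposition of $G_M$ to avoid proper cycle ears after $Q_0$, which is always possible). Collecting these gives an odd proper ear decomposition of $G$.

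\medskip\noindent\textbf{Main obstacle.} The delicate part is the bookkeeping in the ($\Rightarrow$) direction: verifying that the edge-disjointness and the ``endpoints are old, internal vertices are new'' property of the directed ears in $G_M$ transfer cleanly to odd path ears in $G$, and in particular ensuring the first ear becomes a cycle of length exactly $\ge 4$ rather than something degenerate, and that cycle-type later ears are correctly reshaped into odd paths. This is essentially a careful correspondence argument rather than a deep new idea, but it is where all the case analysis lives; alternatively, one can cite \cite[Theorem~2.4.17]{frank2011connections} or \cite[Theorem~4.1.6]{lovasz1986matching} directly and only sketch this translation, which is the route I would take given the lemma is quoted as ``cf.''
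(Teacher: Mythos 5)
Your proposal follows essentially the same route as the paper: both directions go through Lemma~\ref{lemma: DM-irreducible} and Lemma~\ref{lemma:SC no ear}, translating ear decompositions back and forth between $G$ and the auxiliary digraph $G_M$, and your explicit construction of $M$ from alternate edges of the ears in the ($\Leftarrow$) direction is a fine substitute for the paper's shorter induction on the prefix subgraphs $G_i$.

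One concrete soft spot in your ($\Rightarrow$) direction: your two proposed fixes for cycle-type ears $Q_i$ ($i \ge 1$) do not quite work as stated. Choosing an ear decomposition of an arbitrary strongly connected digraph that avoids proper cycle ears after $Q_0$ is \emph{not} always possible (two directed cycles sharing a single vertex already force one), and ``splitting at the old vertex'' yields a closed walk, not a path between two \emph{different} old vertices as the definition of an ear requires. The correct resolution, which the paper obtains from its inductive invariant, is that in $G_M$ cycle ears after $Q_0$ simply cannot occur: every old vertex of $V^-$ has its unique outgoing edge (in $\overleftarrow{M}$) already used, and every old vertex of $V^+$ has its unique incoming edge already used, so a later ear must start at an old $V^+$-vertex with an edge of $\overrightarrow{E}$ and end at an old $V^-$-vertex with an edge of $\overrightarrow{E}$ --- forcing it to be an odd path (and, with the initial cycle chosen through a non-matching edge $e_0$ as you and the paper both do, forcing $|E(P_0)| \ge 4$). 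With that observation substituted for your two fixes, your argument matches the paper's; the only remaining bookkeeping is discarding the duplicated trivial ears $\overrightarrow{e}$ for $e \in M$ when projecting back to $G$.
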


\begin{proof}
We give a proof to make this paper self-contained.

Suppose that a bipartite graph $G = (V^+, V^-; E)$ with $|V| \ge 4$ is balanced and DM-irreducible.
By Lemma~\ref{lemma: DM-irreducible}, there exists a perfect matching $M$ in $G$ and $G_M$ is strongly connected.
Since $|V| \ge 4$ and $G_M$ is connected, there exists an edge $e_0 = \{u_0, v_0\} \in E \setminus M$, where $u_0 \in V^+$ and $v_0 \in V^-$.
As $G_M$ is strongly connected, it contains a path from $v_0$ to $u_0$, which forms a cycle of length at least $4$ together with the edge $\overrightarrow{e_0} = (u_0, v_0)$.
Let $(P_0, P_1, \dots, P_f)$ be an arbitrary ear decomposition of $G_M$ such that $P_0$ is this cycle (the existence follows from Lemma~\ref{lemma:SC no ear} with shrinking $P_0$ into a single vertex).
We then see by induction on $i = 0, 1, \dots, f - 1$ that
\begin{itemize}
\setlength{\itemsep}{0mm}
\item for every $e = \{u, v\} \in M$, either $\{u, v\} \cap \bigcup_{j=0}^i V(P_j) = \emptyset$ or $\{u, v\} \subseteq \bigcup_{j=0}^i V(P_j)$ and $\overleftarrow{e} = (v, u) \in \bigcup_{j=0}^i E(P_j)$, and
\item $P_{i+1}$ starts and ends with edges in $\overrightarrow{E}$, which implies that it is of odd length.
\end{itemize}
Thus, we are done by taking the corresponding ear decomposition of the underlying graph (with removing each ear consisting of a single edge $\overrightarrow{e} \in \overrightarrow{M}$, which is duplicated).

Suppose that a bipartite graph $G$ with $|V(G)| \ge 4$ has an odd proper ear decomposition $(P_0, P_1, \dots, P_f)$.
We then see by induction on $i = 0, 1, \dots, f$ that the subgraph $G_i$ of $G$ defined by $V(G_i) \coloneqq \bigcup_{j=0}^i V(P_j)$ and $E(G_i) \coloneqq \bigcup_{j=0}^i E(P_j)$ is balanced and DM-irreducible (it is straightforward by Lemma~\ref{lemma: DM-irreducible}), which concludes that $G$ itself is balanced and DM-irreducible.
\end{proof}

An ear is called \emph{trivial} if it consists of a single edge, and \emph{nontrivial} otherwise.
Also, an ear is called \emph{long} if it is of length at least $5$.\footnote{This redefinition is reasonable by considering the relation between the strong connectivity of a digraph and the DM-irreducibility of a balanced bipartite graph (cf.~\cite[Section~2.3]{berczi2018DMImaking}); a path of length $\ell$ in the former corresponds to a path of length $2\ell - 1$ in the latter (as a vertex in the former is split into two adjacent vertices in the latter).}

We start to describe our FPT algorithm for \textsc{UnweightedDMISS}.
Let $G = (V^+, V^-; E)$ be the input DM-irreducible balanced bipartite graph, where $n = |V^+| \ge 2$ and $m = |E|$.
We first find an odd proper ear decomposition $\mathcal{E} = (P_0, P_1, \dots, P_f)$ such that
\begin{itemize}
\setlength{\itemsep}{1mm}
\item $\{P_1, P_2, \dots, P_s\}$ is the set of long ears in $\mathcal{E}$, which is inclusion-wise maximal (i.e., the set of long ears in $\mathcal{E}$ is not a proper subset of that in any ear decomposition $\mathcal{E}'$ of $G$, but $s$ is not necessarily maximized), and
\item $\{P_1, P_2, \dots, P_r\}$ is the set of nontrivial ears in $\mathcal{E}$ (note that $s \le r \le f$).
\end{itemize}
This can be done in polynomial time.\footnote{For example, naively, we can start with an arbitrary cycle $P_0$ of length at least $4$, and greedily add a long ear as far as possible; a long ear can be found by an exhaustive search of the first and last three vertices, where the number of candidates is $O(n^6)$.}

For each $i = 0, 1, \dots, f$, we define a subgraph $G_i$ of $G$ by $V(G_i) \coloneqq \bigcup_{j=0}^i V(P_j)$ and $E(G_i) \coloneqq \bigcup_{j=0}^i E(P_j)$.
Let $X = V(G_s)$ and $Y = V \setminus X$.
We then observe the following two claims.

\begin{claim}\label{claim:DMISS maximum}
  $G_r$ is a DM-irreducible spanning subgraph of $G$, which has $(3n-2)-\left(\frac{|X|}{2}-s-2\right)$ edges.
\end{claim}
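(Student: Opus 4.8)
The plan is to first establish that $G_r$ is DM-irreducible, and then count its edges carefully using the structure of the ear decomposition. For the first part, I would observe that $(P_0, P_1, \dots, P_r)$ is itself an odd proper ear decomposition of $G_r$: indeed, each $P_i$ with $i \le r$ is a nontrivial ear, so in particular $P_0$ is a cycle of length at least $4$ and each $P_i$ ($1 \le i \le r$) is a path of odd length (this is inherited from $\mathcal{E}$ being odd proper). Hence by Lemma~\ref{lemma:DMI no ear}, $G_r$ is balanced and DM-irreducible, provided $|V(G_r)| \ge 4$, which holds since $|V(P_0)| \ge 4$. The fact that $G_r$ is \emph{spanning} is exactly the point where the inclusion-wise maximality of the long ears (and the structure of $\mathcal{E}$) must be used: every trivial ear $P_i$ with $i > r$ is a single edge joining two vertices already present, so $\bigcup_{j=0}^r V(P_j) = \bigcup_{j=0}^f V(P_j) = V$, giving $V(G_r) = V$.

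\textbf{Edge count.} For the counting, I would use two bookkeeping identities. First, since $\mathcal{E}$ is an ear decomposition, adding ear $P_i$ to $G_{i-1}$ increases the number of vertices by $|E(P_i)| - 1$ if $P_i$ is a path (an ear with two distinct endpoints already in $G_{i-1}$ and $|E(P_i)|-1$ new inner vertices) — and the same formula $|E(P_i)|-1$ also counts correctly for a cycle-ear — while it increases the number of edges by exactly $|E(P_i)|$. So telescoping from $P_0$ (which contributes $|V(P_0)|$ vertices and $|V(P_0)|$ edges) gives $|E(G_r)| = |V(G_r)| + r = 2n + r$, since $G_r$ is spanning. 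The remaining task is to pin down $r$ in terms of $|X|$ and $s$. Here I restrict attention to the long ears $P_1, \dots, P_s$, which together with $P_0$ build up $G_s$ on vertex set $X$: by the same telescoping, $|X| = |V(G_s)| = |E(G_s)| - s$, and $|E(G_s)| = |E(G_0)| + \sum_{i=1}^s |E(P_i)| \ge |V(P_0)| + 5s \ge 4 + 5s$. This alone only bounds $|X|$; to get an exact relation I instead count edges a third way and combine.

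\textbf{Putting it together.} The cleanest route is: $|E(G_r)| = 2n + r$ from the spanning telescoping above, and $(3n-2) - |E(G_r)| = n - 2 - r$, so it remains to show $r = \frac{|X|}{2} - s - 2 + (n-2) - \bigl((n-2)-r\bigr)$ — i.e., I must verify that $n - 2 - r = \frac{|X|}{2} - s - 2$, equivalently $r - n = s - \frac{|X|}{2}$, equivalently $n + s - r = \frac{|X|}{2}$. To see $|X| = 2(n + s - r)$: the nontrivial non-long ears $P_{s+1}, \dots, P_r$ are each paths of odd length $\ge 3$ but $< 5$, hence of length exactly $3$, so each contributes exactly $2$ new vertices to the vertex set; thus $|V| = |X| + \sum_{i=s+1}^r 2 = |X| + 2(r-s)$, and since $|V| = 2n$ we get $|X| = 2n - 2(r-s) = 2(n+s-r)$, as desired. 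Rearranging yields $(3n-2) - |E(G_r)| = \frac{|X|}{2} - s - 2$, which is the claimed formula.

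\textbf{Main obstacle.} The one delicate point is the claim that $G_r$ is spanning — that no vertex of $G$ is introduced by a trivial ear. This is where the hypothesis that $\{P_1,\dots,P_r\}$ is \emph{exactly} the set of nontrivial ears is essential: a trivial ear has length $1$ and, being a path between two \emph{distinct} already-present vertices, introduces no new vertex. I expect the rest of the argument to be routine telescoping, but I should also double-check the base case $|V(P_0)| \ge 4$ ensures Lemma~\ref{lemma:DMI no ear} applies, and that the identity $|E(G_i)| = |V(G_i)| + i$ is maintained correctly in the (here irrelevant, since $i \le r$ means all ears are nontrivial paths, but conceptually present) cycle-ear case.
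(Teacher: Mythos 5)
Your proof is correct and follows essentially the same route as the paper: telescoping the vertex and edge counts over the ear decomposition to get $|E(G_r)| = 2n + r$, and then using the fact that each nontrivial non-long ear is a path of length exactly $3$ contributing two new vertices, so that $|X| = 2n - 2(r-s)$, which rearranges to the claimed formula. Your extra care in justifying that $G_r$ is spanning (trivial ears add no new vertices) and that $(P_0,\dots,P_r)$ is itself an odd proper ear decomposition is a welcome elaboration of details the paper leaves implicit.
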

\begin{proof}
    Since $V(G_r) = V(G)$, by Lemma~\ref{lemma:DMI no ear}, $G_r$ is a DM-irreducible spanning subgraph of $G$.
    By definition, we have
\begin{align*}
|V(G_r)| &= |V(P_0)| + \sum_{i=1}^r (|V(P_i)| - 2) = |V(P_0)| + \sum_{i=1}^r (|E(P_i)| - 1),\\
|E(G_r)| &= |E(P_0)| + \sum_{i=1}^r |E(P_i)| = |V(P_0)| + \sum_{i=1}^r |E(P_i)|,
\end{align*}
    and hence $|E(G_r)| = |V(G_r)| + r = 2n + r = (3n - 2) - (n - r - 2)$.
    We also have
    \[|X| = 2n - |Y| = 2n - \sum_{i=s+1}^r (|E(P_i)| - 1) = 2n - 2(r - s),\]
    and we are done.
\end{proof}

By Claim~\ref{claim:DMISS maximum}, if $k \le \frac{|X|}{2}-s-2$, then $(G, k)$ is a yes-instance.
We also have $|X| \ge 4s + 4$ by definition of long ears.
In what follows, we assume $k \ge \frac{|X|}{2} - s - 1 \ge \frac{|X|}{4}$, i.e., $|X| \le 4k$.

\begin{claim}\label{claim: perfect matching}
$E(G[Y])$ is a perfect matching in $G[Y]$.
\end{claim}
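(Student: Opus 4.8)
The plan is to use the inclusion-wise maximality of the set of long ears to rule out any nontrivial ear touching only vertices of $Y$. First I would recall the setup: $X = V(G_s)$ collects all vertices covered by $P_0$ together with the long ears $P_1,\dots,P_s$, and $Y = V \setminus X$ is everything else. Every vertex of $Y$ is introduced by one of the short nontrivial ears $P_{s+1},\dots,P_r$ (each of length exactly $3$, since they are nontrivial but not long) or is an endpoint handled within $X$; more precisely, each $P_i$ with $s+1 \le i \le r$ is a path of length $3$ whose two endpoints lie in $\bigcup_{j<i} V(P_j)$ and whose single pair of inner vertices — one in $V^+$, one in $V^-$ — are fresh. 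So $E(G[Y])$ consists precisely of the middle edges of these length-$3$ ears, one middle edge contributing the two inner vertices of each such ear.

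From here the argument splits into two observations. The middle edge of a length-$3$ ear $P_i$ ($s+1 \le i \le r$) joins its two inner vertices, which belong to $Y$; this gives a set of $r-s$ pairwise vertex-disjoint edges covering all $2(r-s) = |Y|$ vertices of $Y$, hence a perfect matching in $G[Y]$ \emph{as far as this subset goes}. What remains is to show $E(G[Y])$ contains nothing else, i.e. there is no edge of $G$ with both endpoints in $Y$ other than these $r-s$ middle edges. This is exactly where maximality of the long ears enters: if some edge $e=\{u,v\}$ with $u,v \in Y$ were not a middle edge of one of the short ears, then $u$ and $v$ lie in (possibly different) short ears, and I would argue that one can re-route the ear decomposition to merge two short ears (or extend one short ear through $e$) into a long ear, contradicting the inclusion-wise maximality of $\{P_1,\dots,P_s\}$. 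Concretely, if $u$ is an inner vertex of $P_i$ and $v$ an inner vertex of $P_j$ with $i \le j$, then splicing the portion of $P_j$ up to $v$, the edge $e$, and the portion of $P_i$ from $u$ onward yields a valid ear of length $\ge 5$ in a modified decomposition that still retains $P_1,\dots,P_s$ as long ears, so we would have strictly enlarged the long-ear set; the case $i=j$ (both endpoints inner vertices of the same length-$3$ ear) is handled separately but is immediate since a length-$3$ path $a\,u\,v\,b$ with a chord $\{u,v\}$ is just the middle edge itself, so no new edge arises. One also checks edges between an inner vertex of a short ear and an endpoint already in $X$ do not count, since such an endpoint is in $X$, not $Y$.

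The main obstacle I anticipate is making the re-routing rigorous: I need to verify that after rerouting, what remains really is a legitimate odd proper ear decomposition of $G$ with a strictly larger set of long ears, which requires being careful about (i) keeping all ears edge-disjoint and parity-correct (odd length, endpoints on opposite sides), (ii) ensuring the endpoints of the newly merged long ear lie in the union of the earlier ears, and (iii) handling the bookkeeping when $u$ and $v$ sit in the same short ear versus different ones, and when one of them is an endpoint of its ear rather than an interior vertex. A clean way to sidestep some of the fiddliness is to argue at the level of the auxiliary digraph $G_M$ and ear decompositions there (as in the proof of Lemma~\ref{lemma:DMI no ear}), but I expect the bipartite-graph-level argument to be the more transparent one to write. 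Once no extra edge survives in $G[Y]$, the $r-s$ middle edges form exactly $E(G[Y])$ and cover all of $Y$, so $E(G[Y])$ is a perfect matching in $G[Y]$, completing the proof.
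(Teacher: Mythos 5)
Your overall strategy is the same as the paper's: the middle edges of the length-$3$ ears $P_{s+1},\dots,P_r$ form a perfect matching on $Y$, and any additional edge of $G[Y]$ should be spliced together with pieces of the short ears into a long ear, contradicting the inclusion-wise maximality of $\{P_1,\dots,P_s\}$. The first half of your argument is fine. The gap is in the splice itself. Write $P_i = a_i\, y_i^{-}\, y_i^{+}\, b_i$ with inner vertices $y_i^{\pm}\in Y$; an extra edge of $G[Y]$ then has the form $e=\{y_i^+,y_j^-\}$ with $i\neq j$. Your ``portion of $P_j$ up to $v$, the edge $e$, and the portion of $P_i$ from $u$ onward'' consists, on each side, of a single end-edge of a length-$3$ path, so the spliced ear has length $3$ --- odd and proper, but \emph{not long}, hence it gives no contradiction with the maximality of the set of long ears. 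To reach length $\ge 5$ you must route each side through the \emph{other} inner vertex of its ear, i.e.\ take $a_i\, y_i^-\, y_i^+$, then $e$, then $y_j^-\, y_j^+\,\cdots$ (the paper's $Q_i$, $e$, $Q_j$).

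Once that choice is made, two further problems appear, and they are precisely where the paper's proof does its real work rather than routine bookkeeping. First, the two length-$2$ portions need not be internally disjoint (the edge leaving $y_j^+$ inside $P_j$ may end at $y_i^-$); the paper handles this by choosing the offending edge $e=\{y_i^+,y_j^-\}$ with $i+j$ minimum and giving a separate splice (using the two end edges of $P_i$) in the overlapping case. Second, the endpoints of a short ear $P_j$ with $j>s+1$ need not lie in $X=V(G_s)$ --- they may be inner vertices of earlier short ears, i.e.\ in $Y$ --- so the new ear cannot always be attached directly to $G_s$; the paper therefore takes $Q_j$ to be a path of length \emph{at least} $2$ continuing through $Y$ until it reaches $X$, again controlled by the minimal choice of $e$. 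You flagged these as fiddly details to be checked later, but without the minimal-index device and the extended path to $X$ the constructed ear is either too short, not attachable to $G_s$, or not a path, so the proposal as written does not yet contain a correct construction of the contradicting long ear.
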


\begin{proof}
By definition, $P_{s+1}, P_{s+2}, \dots, P_r$ are paths of length $3$ and the two inner vertices of each path are in $Y$.
Thus, $E(G[Y])$ indeed has a perfect matching
\[\{\{y_i^+, y_i^-\} \mid \text{$\{y_i^+, y_i^-\}$ is the middle edge of the ear $P_i$ $(s + 1 \le i \le r)$}\}.\]
By the maximality of long ears, we can see that $E(G[Y])$ contains no other edge as follows.

Suppose to the contrary that $E(G[Y])$ contains another edge $e = \{y_i^+, y_j^-\}$ with $i \neq j$.
We pick such an edge minimizing $i + j$, and assume $i < j$ by symmetry.
Then, $G_j$ has a path $Q_i$ of length $2$ between $y_i^+$ and some vertex $x^+ \in X \cap V^+$ (a part of the ear $P_i$) and a path $Q_j$ of length at least $2$ between $y_j^-$ and some vertex $x^- \in X \cap V^-$ starting with the edge $\{y_j^+, y_j^-\}$ whose inner vertices are in $Y$.
If $Q_i$ and $Q_j$ are disjoint, then $Q_i$, $Q_j$, and $e$ form a long ear that can be added to $G_s$ (just after $P_s$ in $\mathcal{E}$), which contradicts the maximality of long ears.
Otherwise, $Q_j$ start with the edges $\{y_j^+, y_j^-\}$ and $\{y_j^+, y_i^-\}$ (otherwise, there exists another edge $e' = \{y_{i'}^+, y_{j'}^-\}$ with $i' \neq j'$ and $i' + j' < i + j$, a contradiction).
In this case, the first two edges of $Q_j$, $e$, and the two end edges in the ear $P_i$ form a long ear that can be added to $G_s$ (just after $P_s$ in $\mathcal{E}$), which again contradicts the maximality of long ears.
\end{proof}

We consider a bipartite graph $\tilde{H}$ with bipartition $(\tilde{Z}, \tilde{Y})$ defined as follows:
\begin{align*}
\tilde{Y} &\coloneqq \{ \tilde{y} \coloneqq\{y^+,y^-\} \mid y^+,y^- \in Y,\ \{y^+,y^-\} \in E \},\\
\tilde{Z} &\coloneqq \{ \tilde{z} \coloneqq\{ z^+, z^- \} \mid \exists \tilde{y} = \{y^+, y^-\} \in \tilde{Y},\ \{\{ z^+, y^- \}, \{z^-, y^+\}\} \subseteq E \},\\
E(\tilde{H}) &\coloneqq \{ \{\tilde{z}, \tilde{y} \} \mid \tilde{z} = \{z^+, z^-\} \in \tilde{Z},\  \tilde{y} = \{y^+, y^-\} \in \tilde{Y},\ \{\{ z^+, y^- \}, \{z^-, y^+\}\} \subseteq E \}.
\end{align*}
Note that each vertex in $\tilde{H}$ is a pair of vertices in $G$, one of which is in $V^+$ and the other is in $V^-$.

Let $\Zout \subseteq \tilde{Z}$ be a vertex set maximizing $|\Zout|$ subject to $2|\Zout| > |\Gamma_{\tilde{H}} (\Zout)|$, and let $\Zin \coloneqq \tilde{Z} \setminus \Zout$.
Then, by the maximality of $\Zout$, we have $2|\tilde{Z}'| \le |\Gamma_{\tilde{H}} (\tilde{Z}')|$ for any $\tilde{Z}' \subseteq \Zin$.
Let $\Yout \coloneqq \Gamma_{\tilde{H}} (\Zout) \setminus \Gamma_{\tilde{H}} (\Zin)$, $\Yin \coloneqq \tilde{Y} \setminus \Yout$, and $\Hin \coloneqq \tilde{H}[\Zin \cup \Yin]$.
Then, as $\Gamma_{\tilde{H}}(\Zin) \subseteq \Yin$, we have $2|\tilde{Z}'| \le |\Gamma_{\tilde{H}} (\tilde{Z}')| = |\Gamma_{\Hin} (\tilde{Z}')|$ for any $\tilde{Z}' \subseteq \Zin$.
By Hall's theorem~\cite{hall1935representatives}, $\Hin$ has two disjoint matchings such that each of them covers $\Zin$ and no vertex in $\Yin$ is covered by both of them.
In other words, $\Hin$ contains a subgraph $\tilde{H}'$ such that $\deg_{\tilde{H}'}(\tilde{z}) = 2$ for every $\tilde{z} \in \Zin$ and $\deg_{\tilde{H}'}(\tilde{y}) \leq 1$ for every $\tilde{y} \in \Yin$.
Let $\tilde{M} \coloneqq E(\tilde{H}')$, and $\tilde{Z}(\tilde{M})$ and $\tilde{Y}(\tilde{M})$ be the sets of end vertices of edges in $\tilde{M}$ that are in $\tilde{Z}$ and in $\tilde{Y}$, respectively.

If $\Zin = \emptyset$, then $\Yout = \tilde{Y}$.
Hence, $|\tilde{Y}| = |\Yout| < 2|\Zout| \le 2|\tilde{Z}| \le 2(2k)^2 = 8k^2$ (recall that $|X| \le 4k$ and each vertex $\tilde{z} \in \tilde{Z}$ is a pair of vertices in $G$ one of which is in $X \cap V^+$ and the other is in $X \cap V^-$).
This implies $|V| = |X| + |Y| = |X| + 2|\tilde{Y}| \le 4k(4k+1)$.
In this case, the input size is bounded by a function of the parameter $k$, and hence we can solve the problem by an exhaustive search in FPT time.

Otherwise, i.e., if $\Zin \neq \emptyset$, the following claim holds, which is analogous to \cite[Claim in p.~2927]{bang2008minimum}.
We define $\tilde{Y}' \coloneqq \Yin \setminus \tilde{Y}(\tilde{M})$ and $Y' \coloneqq \bigcup_{\tilde{y} \in \tilde{Y}'} \tilde{y} \ (\subseteq V)$.

\begin{claim}\label{claim:DMISS claim}
There exists a minimum DM-irreducible spanning subgraph $G^*$ of $G$ satisfying the following two conditions.
\begin{itemize}
\setlength{\itemsep}{0mm}
\item $G^* - Y'$ is a DM-irreducible spanning subgraph of $G - Y'$.
\item For any $\tilde{y} = \{y^+, y^-\} \in \tilde{Y}'$, there exists $\tilde{z} = \{z^+, z^-\} \in \tilde{Z}$ with $\{\tilde{z}, \tilde{y}\} \in E(\tilde{H})$ such that the edges in $E(G^*)$ incident to $y^+$ or $y^-$ are $\{z^+, y^-\}$, $\{y^+, y^-\}$, and $\{y^+, z^-\}$.
\end{itemize}
\end{claim}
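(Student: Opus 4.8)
The plan is to establish the claim by an exchange argument starting from an arbitrary minimum DM-irreducible spanning subgraph $G^\dagger$ of $G$, and then modifying it locally around each $\tilde{y} \in \tilde{Y}'$ to produce the desired $G^*$ without increasing the number of edges and without destroying DM-irreducibility. First I would recall what we know about the vertices in $Y' = \bigcup_{\tilde{y}\in\tilde{Y}'}\tilde{y}$: each $\tilde{y} = \{y^+, y^-\} \in \tilde{Y}' \subseteq \Yin$ has some neighbor $\tilde{z} \in \tilde{Z}$ in $\tilde{H}$ (indeed a nonempty neighborhood, since $\tilde{Y}' \subseteq \Yin$ and $\Gamma_{\tilde H}(\Zin)\subseteq\Yin$, and the matching argument shows $\Yin$-vertices not covered by $\tilde M$ still have neighbors); moreover, since $\tilde{y}\notin\tilde{Y}(\tilde{M})$, the two ``port'' edges $\{z^+, y^-\}$ and $\{y^+, z^-\}$ of such a $\tilde{z}$ are available in $G$. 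The key structural fact I would use is Claim~\ref{claim: perfect matching}: inside $G[Y]$ the only edges are the matching edges $\{y^+,y^-\}$, so $y^+$ and $y^-$ have all their other $G$-neighbors in $X$ (and in fact, by the analysis around the $\tilde H$ construction, in the first coordinates of $\tilde Z$-vertices); this tightly constrains how $y^+,y^-$ can be ``used'' by any DM-irreducible spanning subgraph.

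The main step is the local replacement. Fix $\tilde{y} = \{y^+, y^-\} \in \tilde{Y}'$ and pick a witness $\tilde{z} = \{z^+, z^-\}\in\tilde Z$ with $\{\tilde z,\tilde y\}\in E(\tilde H)$. I would show that replacing all edges of $G^\dagger$ incident to $\{y^+,y^-\}$ by exactly the three edges $\{z^+,y^-\}$, $\{y^+,y^-\}$, $\{y^+,z^-\}$ keeps the graph DM-irreducible and does not increase the edge count. For the edge count, note that in any DM-irreducible spanning subgraph, $\deg(y^+)\ge 2$ and $\deg(y^-)\ge 2$ (each side-vertex needs in- and out-degree at least $1$ in every auxiliary graph, cf.\ Lemmas~\ref{lemma: DMI includes arborescence} and \ref{lemma:matching in arborescence}), and since the only $Y$-internal edge at these vertices is $\{y^+,y^-\}$, we have at least $3$ edges incident to $\{y^+,y^-\}$ in $G^\dagger$; so replacing by exactly $3$ such edges can only help. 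For DM-irreducibility I would argue via Lemma~\ref{lemma:DMI no ear}: take an odd proper ear decomposition of $G^\dagger$; the two inner vertices $y^+,y^-$ appear either in $P_0$ or as inner vertices of some ear (they cannot be endpoints of a nontrivial ear nor carry a trivial ear among themselves by the degree/structure constraints), so they sit on a length-$3$ subpath $x \,\text{--}\, y \,\text{--}\, y' \,\text{--}\, x'$ (after possibly re-rooting/re-ordering) where $x\in V^+, x'\in V^-$ — actually a subpath $z^{+\prime}\text{--}y^-\text{--}y^+\text{--}z^{-\prime}$ of some ear. One then splices out this subpath and splices in $z^+\text{--}y^-\text{--}y^+\text{--}z^-$, which remains a valid odd proper ear decomposition because $\{z^+,y^-\},\{y^+,z^-\}\in E(\tilde H)$ guarantees these edges exist in $G$ and the attachment vertices $z^+,z^-$ are in the right color classes and in the already-built part. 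Doing this for all $\tilde{y}\in\tilde{Y}'$ simultaneously is fine because distinct $\tilde y\in\tilde Y'$ are vertex-disjoint in $G$ (they are distinct matching edges of $G[Y]$) and the modifications touch only these vertices plus fixed anchors in $\tilde Z$, hence do not interfere. Finally, among all DM-irreducible spanning subgraphs with these three prescribed edges at every $\tilde y\in\tilde Y'$, I pick one of minimum size; call it $G^*$. It has the same size as $G^\dagger$ (hence is a minimum DM-irreducible spanning subgraph), it satisfies the second bullet by construction, and for the first bullet one checks that deleting $Y'$ leaves an odd proper ear decomposition of $G^*-Y'$ (the spliced subpaths shrink back to trivial/absorbed pieces while the rest of the decomposition survives), so $G^*-Y'$ is DM-irreducible spanning in $G-Y'$.

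I expect the main obstacle to be the DM-irreducibility bookkeeping in the ear-decomposition surgery: I must make sure that when I excise the length-$3$ subpath through $y^+,y^-$ from an ear $P_i$ and reconnect through $z^+,z^-$, the resulting sequence is still a legal odd proper ear decomposition — in particular that parities stay odd, that no ear loses its ``both endpoints already present'' property, and that when $y^+,y^-$ sit on $P_0$ (rather than a later ear) the cycle $P_0$ remains an even cycle of length $\ge 4$ after the swap, or is itself reorganized. There is also a subtle case where the original anchors $z^{+\prime},z^{-\prime}$ of $y^-,y^+$ in $G^\dagger$ differ from the chosen witness $z^+,z^-$; here I rely on $\{z^+,y^-\},\{y^+,z^-\}\in E(G)$ (from $\tilde z\in\tilde Z$ being a genuine neighbor of $\tilde y$ in $\tilde H$) to guarantee the new edges are present, and on the fact that $z^+,z^-\in X = V(G_s)$ lie in the part of the decomposition built before any nontrivial $Y$-ear, so attaching to them is always legitimate. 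I would also need to double-check the edge-count inequality in the degenerate case where $G^\dagger$ already uses exactly these three edges at $y^+,y^-$ (then no change is needed) versus using three different edges (then the swap is strictly size-non-increasing), so that the final $G^*$ is genuinely minimum.
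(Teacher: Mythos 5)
Your proposal captures the easy half of the argument but misses the central difficulty. The paper's proof splits into two cases. When $G' - Y'$ happens to be DM-irreducible already, one indeed just re-attaches each $\tilde{y} \in \tilde{Y}'$ by a three-edge path through an arbitrary witness $\tilde{z}$, and the edge count works out by the same degree observation you make ($\deg(y^+), \deg(y^-) \ge 2$ and the only $Y$-internal edge is $\{y^+, y^-\}$, so at least three edges are saved). Your local replacement is essentially this case. The gap is the other case: when $G' - Y'$ is \emph{not} DM-irreducible, the vertices of $Y'$ carry connectivity that is essential for the rest of the graph. Concretely, a pair $\tilde{y} = \{y^+, y^-\}$ may be the only route (in the auxiliary digraph) from some $z_1^+ \in X$ to some $z_2^- \in X$, and after your surgery --- which keeps only one witness pair $\tilde{z}$ and, worse, after deleting $Y'$ entirely as required by the first bullet --- that connection is simply gone; it does not ``shrink back to a trivial ear'' because $\{z^+, z^-\}$ need not be an edge of $G^*$ at all (one endpoint is in $V^+$, the other in $V^-$, but no such edge is guaranteed). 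So neither the claim that the replacement preserves DM-irreducibility nor the claim that $G^* - Y'$ is DM-irreducible follows from your argument.

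This is exactly why the surrounding setup introduces $\Zout$, $\Zin$, and the subgraph $\tilde{M} = E(\tilde{H}')$ with $\deg_{\tilde{H}'}(\tilde{z}) = 2$ for all $\tilde{z} \in \Zin$: in the hard case the paper performs a \emph{global} reconstruction, deleting all edges incident to $Y(\tilde{M})$ and re-wiring each $\tilde{z} \in \Zin$ through its two dedicated partners in $\tilde{Y}(\tilde{M})$ (which lie outside $\tilde{Y}'$ and therefore survive the deletion of $Y'$), then exhibiting a new perfect matching $M^\sharp$ and proving strong connectivity of $G^\sharp_{M^\sharp} - Y'$ by a shortest-violating-path argument that reroutes every former two-edge detour $s \to v \to t$ (with $v \in Y_{\mathrm{in}}$) through a six-edge path via the $\tilde{M}$-partners. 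Your proposal never uses $\tilde{M}$, $\Zin$, or $\Zout$, which is a strong sign the hard case is not being addressed. Separately, the ear-decomposition surgery you lean on is also not sound as stated: if $y^+$ or $y^-$ has degree greater than $2$ in $G^\dagger$ they need not sit as inner vertices of a single length-$3$ subpath of one ear, and even when they do, excising that subpath splits the host ear into a prefix and suffix whose endpoints ($z_1^+$, $z_1^-$) and parities need not yield legal odd ears. I would recommend abandoning the purely local exchange and following the case split on whether $G' - Y'$ is DM-irreducible.
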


Before proving Claim~\ref{claim:DMISS claim}, we complete our FPT algorithm for \textsc{UnweightedDMISS}.
By Claim~\ref{claim:DMISS claim}, the answer is yes if and only if there exists a DM-irreducible spanning subgraph of $G - Y'$ consisting of at most $3n - 2 - k - 3|\tilde{Y}'|$ edges.
We can check all possible spanning subgraphs of $G - Y'$ by an exhaustive search in FPT time, since
\[|V(G - Y')| = |X| + 2|\tilde{Y}(\tilde{M})| + 2|\tilde{Y}_\mathrm{out}| < |X| + 4|\tilde{Z}(\tilde{M})| + 4|\tilde{Z}_\mathrm{out}| = |X| + 4|\tilde{Z}| \le 4k(4k+1).\]

\begin{proof}[Proof of Claim~\ref{claim:DMISS claim}]
Let $G'$ be a minimum DM-irreducible spanning subgraph of $G$.
If $G' - Y'$ is DM-irreducible, then we are done as follows.
By Claim~\ref{claim: perfect matching}, any DM-irreducible spanning subgraph of $G$ (including $G'$) contains, for each pair $\tilde{y} = \{y^+, y^-\} \in \tilde{Y}$, either
\begin{itemize}
\setlength{\itemsep}{0mm}
\item a path of length $3$ between two vertices in $V(G' - Y)$ through the edge $\{y^+, y^-\}$, or
\item at least two edges incident to $y^+$ and to $y^-$ other than $\{y^+, y^-\}$ (at least four in total).
\end{itemize}
Thus, a spanning subgraph of $G$ obtained from $G' - Y'$ as follows is DM-irreducible and has as few edges as $G'$ itself, and is a desired graph: for each pair $\tilde{y} = \{y^+, y^-\} \in \tilde{Y}'$, choose an arbitrary pair $\tilde{z} = \{z^+, z^-\} \in \tilde{Z}$ with $\{\tilde{z}, \tilde{y}\} \in E(\tilde{H})$ and add three edges $\{z^+, y^-\}$, $\{y^+, y^-\}$, and $\{y^+, z^-\}$.

Otherwise, i.e., if $G' - Y'$ is not DM-irreducible, we transform $G'$ as follows: remove all the edges incident to $Y(\tilde{M}) \coloneqq \bigcup_{\tilde{y} \in \tilde{Y}(\tilde{M})}\tilde{y} \ (\subseteq V)$, and for each $\{\tilde{z}, \tilde{y}\} \in \tilde{M}$, add three edges $\{z^+, y^-\}$, $\{y^+, y^-\}$, and $\{y^+, z^-\}$.
Let $G^\sharp$ be the resulting graph.
We then have $|E(G^\sharp)| \le |E(G')|$, and it suffices to show that $G^\sharp - Y'$ is DM-irreducible.

Let $M'$ be a perfect matching in $G'$.
As $G'$ is DM-irreducible, $G'_{M'}$ is strongly connected.
We first show that $G^\sharp - Y'$ has a perfect matching, which we construct from $M'$ minus the edges incident to $Y_\mathrm{in} \coloneqq \bigcup_{\tilde{y} \in \tilde{Y}_\mathrm{in}}\tilde{y} \ (\subseteq V)$ as follows.
For each removed edge $\{y^+, y^-\} \in E(G[Y(\tilde{M})]) \cap M'$, we just add it again.
The other removed edges in $M'$ are paired as $\{y^+, z^-\}$ and $\{z^+, y^-\}$ such that $\tilde{y} = \{y^+, y^-\} \in \Yin$, $\tilde{z} = \{z^+, z^-\} \in \Zin$, and $\{\tilde{z}, \tilde{y}\} \in E(\Hin)$.
By definition of $\tilde{M}$, there exists a pair $\tilde{y}' = \{y'^+, y'^-\} \in \tilde{Y}(\tilde{M})$ with $\{\tilde{z}, \tilde{y}'\} \in \tilde{M} \subseteq E(\Hin)$, which means that $\{z^+, y'^-\}, \{y'^+, y'^-\}, \{y'^+, z^-\} \in E(G^\sharp)$.
By definition of $\tilde{M}$ again, such $\tilde{y}'$ is distinct for each $\tilde{z} \in \Zin = \tilde{Z}(\tilde{M})$, and hence we can add $\{z^+, y'^-\}$ and $\{y'^+, z^-\}$ instead of each pair of $\{y^+, z^-\}$ and $\{z^+, y^-\}$.
Finally, for each pair $\{y^+, y^-\} \in \tilde{Y}(\tilde{M})$, if at least one of $y^+$ and $y^-$ is not matched, then both $y^+$ and $y^-$ are not matched due to the above procedure, and add an edge $\{y^+, y^-\}$.
Let $M^\sharp$ be the obtained perfect matching in $G^\sharp$.

The remaining task is to show that $G^\sharp_{M^\sharp} - Y'$ is strongly connected.
By the above construction, it suffices to show that for each pair of two vertices $s, t \in V \setminus Y_\mathrm{in}$, $t$ is reachable from $s$ in $G^\sharp_{M^\sharp} - Y'$.
Suppose to the contrary that there exists a pair $(s, t)$ such that $t$ is not reachable from $s$ in $G^\sharp_{M^\sharp} - Y'$.
As $G'_{M'}$ is strongly connected, there exists a path $P$ from $s$ to $t$ in $G'_{M'}$.
Take such a pair of $(s, t)$ and $P$ so that the length of $P$ is minimized.
Then, all the inner vertices of $P$ are in $Y_\mathrm{in}$ (otherwise, we can take an inner vertex $v \not\in Y_\mathrm{in}$ such that a pair of $(s, v)$ and the prefix of $P$ or a pair of $(v, t)$ and the suffix of $P$ is eligible, which contradicts that the length of $P$ is minimized), and hence $s, t \in Z_\mathrm{in}$.

If $\{s, t\} \in \Zin = \tilde{Z}(\tilde{M})$, then by the above construction, there exists a path from $s$ to $t$ through a pair $\tilde{y} \in \tilde{Y}(\tilde{M})$ in $G^\sharp_{M^\sharp} - Y'$, a contradiction.
Otherwise, since $E(G[Y])$ is a matching (Claim~\ref{claim: perfect matching}), $P$ consists of two edges $(s, v)$ and $(v, t)$ such that $v \in Y_\mathrm{in}$ and exactly one of $\{s, v\}$ and $\{t, v\}$ is in $M'$.
By symmetry, assume that $s, t \in V^+$, $v \in V^-$, and $\{t, v\} \in M'$.
Let $\tilde{y} = \{y^+, y^-\} \in \Yin$ be the pair with $y^- = v$.
Then, there exist two pairs $\tilde{z}_1 = \{z_1^+, z_1^-\}$ and $\tilde{z}_2 = \{z_2^+, z_2^-\}$ in $\Zin = \tilde{Z}(\tilde{M})$ such that $z_1^+ = s$, $z_2^+ = t$, $z_1^- = z_2^-$ and $\{y^+, z_1^-\} \in M'$.
By the above construction, there exist two pairs $\tilde{y}_1 = \{y_1^+, y_1^-\}$ and $\tilde{y}_2 = \{y_2^+, y_2^-\}$ in $\tilde{Y}(\tilde{M})$ such that $\{\tilde{z}_1, \tilde{y}_1\}, \{\tilde{z}_2, \tilde{y}_2\} \in \tilde{M}$ and $\{y_1^+, y_1^-\}, \{z_2^+, y_2^-\}, \{y_2^+, z_2^-\} \in M^\sharp$.
This concludes that $G^\sharp_{M^\sharp} - Y'$ has a path from $s = z_1^+$ to $t = z_2^+$ that consists of six edges $(s, y_1^-)$, $(y_1^-, y_1^+)$, $(y_1^+, z_1^-)$, $(z_2^-, y_2^+)$, $(y_2^+, y_2^-)$, and $(y_2^-, t)$, a contradiction.
Thus, we are done.
\end{proof}

\begin{remark}
Another possible parameterized approach is, as a connectivity augmentation problem (like \cite{berczi2018DMImaking}), taking the parameter $k$ as the number of additional edges to achieve DM-irreducibility.
For strong connectivity augmentation, Klinkby, Misra, and Saurabh~\cite{klinkby2021strong} recently showed the first FPT algorithm in this direction.
How to extend their result to the DM-irreducibility problem is highly nontrivial because there are too many possible choices of perfect matchings in the resulting augmented graphs.
This may be possible future work.
\end{remark}

\bibliographystyle{plain}
\bibliography{main}

\end{document}